\newtheorem{thm}{Theorem}
\newtheorem{lemma}[thm]{Lemma}
\newtheorem{prop}[thm]{Proposition}
\newtheorem{cor}[thm]{Corollary}
\newenvironment{defn}[1][Definition.]{\begin{trivlist}\item[\hskip \labelsep {\bfseries #1}]}{\end{trivlist}}
\title{There is no 16-Clue Sudoku: Solving the Sudoku Minimum Number of Clues Problem via Hitting Set Enumeration}
\date{August 31, 2013}
\author{Gary McGuire\footnote{School of Mathematical Sciences, University College Dublin, Ireland. E-mail: gary.mcguire@ucd.ie}, \ Bastian Tugemann\footnote{Munich, Germany.}, \ Gilles Civario\footnote{Irish Centre for High-End Computing, Dublin, Ireland.}}
\begin{document}
\baselineskip=15pt

\maketitle

\begin{abstract}
The sudoku minimum number of clues problem is the following question: what is the smallest number of clues that a sudoku puzzle can have?
For several years it had been conjectured that the answer is 17.
We have performed an exhaustive computer search for 16-clue sudoku puzzles, and did not find any, thus proving that the answer is indeed 17.
In this article we describe our method and the actual search.
As a part of this project we developed a novel way for enumerating hitting sets.
The hitting set problem is computationally hard; it is one of Karp's 21 classic NP-complete problems.
A standard backtracking algorithm for finding hitting sets would not be fast enough to search for a 16-clue sudoku puzzle exhaustively, even at today's supercomputer speeds.
To make an exhaustive search possible, we designed an algorithm that allowed us to efficiently enumerate hitting sets of a suitable size.
\end{abstract}

\newpage
\tableofcontents

\section{Introduction}\label{sec:intro}

Sudoku is a logic puzzle --- one is presented with a \mbox{$9 \times 9$} grid, some of whose cells already contain a digit between 1 and 9; the task is then to complete the grid by filling in the remaining cells such that each row, each column, and each \mbox{$3 \times 3$} box contains the digits from 1 to 9 exactly once.
Moreover, it is always understood that any proper (valid) sudoku puzzle must have only one completion.
In other words, there is only a single solution, only one correct answer.
In this article, we consider the issue of \emph{how many} clues (digits) need to be provided to the puzzle solver in the beginning.
There are 81 cells in a grid, and in most newspapers and magazines, usually around 25 clues are given.
If too few clues are given at first, then there is more than one solution, i.e., the puzzle becomes invalid.
It is natural to ask how many clues are always needed.
This is the sudoku minimum number of clues problem:

\begin{center}
	\colorbox[rgb]{1,1,0.8}{
		\begin{minipage}[c]{9cm}
			\begin{center}
				What is the smallest number of clues that can be given such that a sudoku puzzle has a unique completion?
			\end{center}
		\end{minipage}}
\end{center}
More informally --- what is the smallest number of clues that you could possibly have?

There are puzzles known with only 17 clues, here is an example:
\begin{figure}[h]
	\begin{center}
	\includegraphics[scale=0.85]{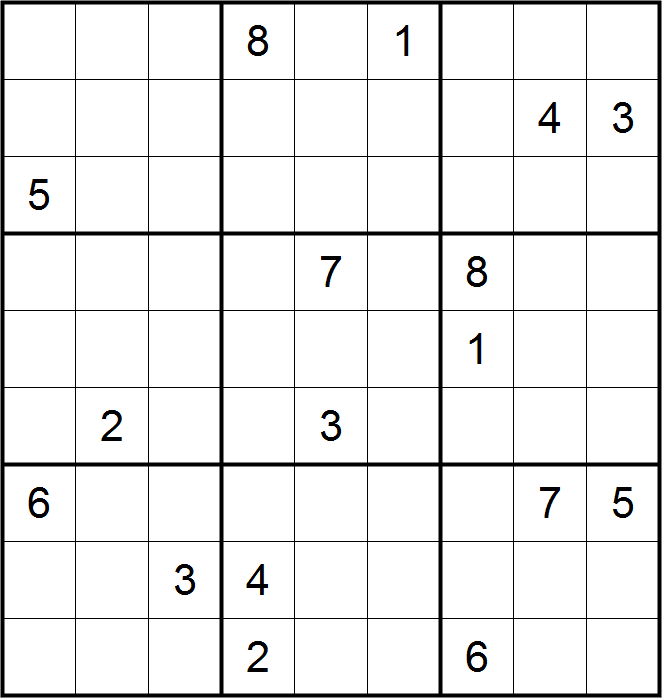}
	\caption{A 17-clue sudoku puzzle}
	\vspace{-15pt}
	\end{center}
\end{figure}

\noindent
However, nobody ever found any proper 16-clue puzzles.
Consequently, people started to conjecture that the answer to the minimum number of clues problem is 17.
We have proved this conjecture, and in this article we present our method.

The strategy we chose to solve this problem is a very obvious one: use a computer to exhaustively search through every possible sudoku solution grid, one by one, for a 16-clue puzzle.
So we took the viewpoint of considering each completed sudoku grid, one at a time, and looking for 16-clue puzzles whose solution was that grid --- the reader can think of these puzzles as being contained in the given grid.
Our year-long search, carried out at the Irish Centre for High-End Computing (ICHEC), turned up no proper 16-clue puzzles, but had one existed, we would have found it.

Due to the sheer number of sudoku solution grids a brute force search would have been infeasible, but we found a better approach to make this project possible.
Our software for exhaustively searching through a completed sudoku grid, named \emph{checker}, was originally released in 2006.
However, this first version was rather slow.
Indeed, the paper \cite{wu} estimates that our original \emph{checker} of late 2006 would take over 300,000 processor-years in order to search every sudoku grid.
With our new algorithm, this computation actually took only about 800 processor-years.
In the present article we describe both the original and the new version of \emph{checker}.
In particular, we describe our hitting set algorithm, which is at the heart of \emph{checker}.

The proof in this article is a computer-assisted proof of a mathematical theorem.
There are many precedents for this type of proof.
The most famous is without doubt the proof of the four-colour the\-orem by Appel and Haken \cite{appel}, see also \cite{robertson}.
Another more recent but also already legendary example of a computer-assisted proof is the proof of the Kepler conjecture by Thomas Hales \cite{hales}.
A very recent example is the proof of the weak Goldbach conjecture \cite{helfgott}.
There are a great many theorems published in the mathematical literature where the authors have used computers in their proof.
Comments about correctness always apply; we discuss this more with regard to our proof in Section~\ref{sec:correctness}.

As a side comment, there have been attempts to solve the sudoku minimum number of clues prob\-lem using mathematics only, i.e., without the help of computers.
However, nobody has made any real progress --- though it is easy to see why a 7-clue puzzle cannot have a unique solution,\footnote{In a 7-clue sudoku puzzle, the two missing digits can be interchanged in any solution to yield another solution.} a theoretical proof of the nonexistence of an 8-clue puzzle is still lacking.
This is far from the answer of 17, so that a purely mathematical solution to the minimum number of clues problem is a long way off.

\subsection{Summary description of method}\label{summmethod}

As just explained, the objective of this project was to prove that there are no 16-clue sudoku puzzles having exactly one completion, or to find such a puzzle, had one existed.
Our choice of method meant that there were three steps:
\begin{enumerate}
	\item make a catalogue of all completed sudoku grids --- obviously only finitely many grids exist;
	\item write \emph{checker}, i.e., implement efficient algorithms for exhaustively searching a given completed sudoku grid for puzzles having 16 clues and whose (unique) solution is the grid at hand;
	\item run through the catalogue of all completed grids and search every grid in turn using \emph{checker}.
\end{enumerate}
We will describe each of these subprojects in detail in Sections~\ref{sec:catalog}, \ref{sec:checker}--\ref{sec:newalgo} and \ref{sec:computation}, respectively. 

\subsection{Other applications of hitting set algorithms}\label{sec:applications}

Solving the sudoku minimum number of clues problem serves as a way to introduce our algorithm for enumerating hitting sets.
However, this algorithm has uses beyond combinatorics.
To begin with, it is in principle applicable to any instance of the hitting set problem, be it the \emph{decision version} (determin\-ing if hitting sets of a given size exist) or the \emph{optimization version} (finding a smallest hitting set).
Such situations occur in bioinformatics (gene expression analysis \cite{ruchkys}), software testing as well as computer networks \cite{abu}, and when finding optimal drug combinations \cite{vazquez,mellor}; the last of these papers lists protein network discovery, metabolic network analysis and gene ontology as further areas in which hitting set problems naturally arise.
Besides, our algorithm can be applied to the hypergraph transversal problem, which appears in artificial intelligence and database theory \cite{boros}.
Lastly, we note that the vertex cover problem from graph theory is a special case of the hitting set problem, and that the set cover problem is in fact equivalent to the hitting set problem.\footnote{We did not find any papers about the set cover problem that were useful to us in improving our hitting set algorithm.}
The former occurs in computational biology, see \cite{niederm}, while the latter finds application in reducing interference in cellular networks \cite{kuhn}.

\subsection{Miscellaneous comments}

This article is an update of a preprint, which we posted online \cite{checker} on $1^{st}$ January 2012 together with the full source code of the new \emph{checker}.
We shall not discuss any of the following topics here: how to solve or create sudoku puzzles, how to rate the difficulty of sudoku puzzles, or how to write a sudoku solver programme.
We emphasize that we are not saying that every completed sudoku grid contains a 17-clue puzzle (actually, very few do).
We \emph{are} saying that no grid contains any 16-clue puzzles.

\section{History}\label{sec:history}

The work on this project began in August 2005, when we started developing \emph{checker}, which to our knowledge was the first computer programme that made it possible to search a sudoku solution grid exhaustively for all \mbox{$n$-clue} puzzles, where both the grid and the number $n$ were supplied by the user.
The final release of this original version of \emph{checker} is of November~2006 \cite{checkerold}.
Throughout 2007, this project lay practically dormant, but in the second half of the following year we realized that there was considerable potential to make \emph{checker} faster.
Therefore, in early 2009 we got started implementing a new version of \emph{checker}, redesigned from the ground up and tailored to the case \mbox{$n=16$}, which we did not post online, however, until every sudoku grid had actually been exhaustively searched for 16-clue puzzles.
This wholly new \emph{checker} takes only a few seconds to search through a typical grid, whereas our original release from late 2006 needs about half an hour per grid on average.

To be able to do some early testing on their clusters, we were granted a Class C project account at the Irish Centre for High-End Computing (ICHEC) in July~2009.
In September of the same year we also successfully applied for a PRACE prototype award, which provided us with nearly four million core hours on JUGENE in J{\"u}lich, Germany --- then Europe's fastest supercomputer --- as well as time on a Bull cluster at CEA, France, a Cray XT5 cluster at CSC, Finland, and an IBM Power 6 cluster at SARA in the Netherlands.
The goal of this prototype project was to evaluate different loadbalancing strategies of the hitting set problem; the actual code we ran was an early version of the new \emph{checker}.

\subsection{Previous work by others}\label{subsec:previous}

\begin{itemize}
	\item In Japan, sudoku was introduced by the publisher Nikoli in the 1980s.
	Japanese puzzle creators have made puzzles with 17 clues, and with no doubt wondered whether 16 clues were possible. 
	Nikoli have a rule that none of their puzzles will have more than 32 clues.

	\item Over the years, people have collected close to 50,000 different 17-clue sudoku puzzles, which may be downloaded from Gordon Royle's homepage \cite{royle}.
	Most of these were found by Royle, who is a mathematician at the University of Western Australia and who compiled this list while searching for a 16-clue puzzle.
	On the sudoku forums \cite{forums} we became aware of a rather special grid (shown on p.~\pageref{img:sfgrid}), which has 29 different 17-clue puzzles, also discovered by Royle.
	Many considered this grid a likely candidate to contain a 16-clue puzzle.
	Actually, we initially started to work on \emph{checker} to search this particular grid.

	\item By the end of 2007, the sudoku minimum number of clues problem had been mentioned in sev\-eral journal publications \cite{delahaye,hayes,herzberg,pegg,taalman}.
	The first one is an article by the French computer scientist Jean-Paul~Delahaye entitled \emph{The Science behind Sudoku}, which was first published in the June 2006 issue of the \emph{Scientific American}.
	This article in fact quotes one of the authors of this study (Gary~McGuire) in conjunction with the minimum number of clues problem.

	\item Between autumn 2007 and spring 2008, a team at the University of Graz in Austria verified by distributed exhaustive computer search that there are no proper sudoku puzzles having eleven clues.
	They continued for about another year, and by May 2009, they had apparently completed most of the computations necessary to show that twelve clues are never sufficient for a unique solution either.
	Their stated aim was to build up to proving that no 16-clue sudoku puzzle exists, yet as of early 2010, that project seemed not to be active anymore \cite{graz}.

	\item Since 2007, Max~Neunh\"offer of the University of St.~Andrews in Scotland has lectured several times at different venues on the mathematics of sudoku, discussing in particular the minimum number of clues problem \cite{neunh}.
	According to the slides of his talks, Neunh\"offer appears to even have written a computer programme for searching a sudoku solution grid for 16-clue puzzles.

	\item In 2008, a 17-year-old girl submitted a proof of the nonexistence of a 16-clue sudoku puzzle as an entry to \emph{Jugend forscht} (the German national science competition for high-school students).
	She later published her work in the journal \emph{Junge Wissenschaft} (No.~84, pp.~24--31).
	However, when Sascha~Kurz, a mathematician at the University of Bayreuth, Germany, studied the proof closely, he found a gap that is probably very difficult, if not impossible, to fix.

	\item In mid-2010, the Bulgarian engineer Mladen Dobrichev released the initial version of his opensource tool \emph{GridChecker} \cite{dobrichev}; he has since provided several updates.
	Dobrichev's programme, also written in C++, basically does the same thing as our original \emph{checker}, although it is at least one order of magnitude faster.

	\item In October~2010, a group of computer scientists at the National Chiao Tung University, Taiwan, led by I-Chen Wu started a distributed search for 16-clue puzzles on the Internet using BOINC.	
	Their strategy is exactly the same as ours (i.e., consider each sudoku solution grid individually and exhaustively search for a 16-clue subset whose only completion is the given grid).
	This had become feasible since Wu's team had succeeded in speeding up our original \emph{checker} by a factor of 129, so that it would take them an estimated 2,400 processor-years to examine every grid for 16-clue puzzles.
	Wu also spoke about this in November~2010 at the International Conference on Technologies and Applications of Artificial Intelligence, and he and Hung-Hsuan Lin together published the article \emph{Solving the Minimum Sudoku Problem} in the conference proceedings, describing some of the techniques they had used to improve \emph{checker}.
	The complete details of their work are described in the article \cite{wu}.	
	As far as the progress of the BOINC search is concerned, according to that project's website, as of $31^{st}$ December 2011 --- the day before we announced our result ($\mathrm{arXiv\!:\!1201.0749v1}$) --- they had checked 1,453,000,000 sudokus (26.5 per cent of all cases that need to be considered, see Section~\ref{sec:catalog}).

	\item Toward the end of December 2011, the digital edition of the book \emph{Taking Sudoku Seriously: The Math Behind the World's Most Popular Pencil Puzzle} by Jason Rosenhouse and Laura Taalman of James Madison University in Harrisonburg, Virginia, came out; the printed version followed several weeks later \cite{rosenh}.
	This book devotes the whole of Section~9.4 \emph{The Rock Star Problem} to the minimum number of clues problem, saying:
	\begin{quote}	
	``If you can figure it out, you will be a rock star in the universe of people who care about such things. Granted, that is a far smaller universe than the one full of people who care about actual rock stars, but still, it would be great.''
	\end{quote}
\end{itemize}

Regarding the hitting set problem, we were surprised by the paucity of relevant literature, given the wide range of applications of an efficient algorithm for finding hitting sets, see Section~\ref{sec:applications}.
Most authors appear to have concentrated on the special case of the $d$-hitting set problem for small $d$, such as \mbox{$d = 3$}, where $d$ is the (maximum) number of elements in the sets to be hit.
However, to tackle the sudoku minimum number of clues problem efficiently, we would have needed a method for \mbox{$d = 12$} at least, so that these algorithms were not actually useful to us.
Some researchers have generalized their ideas to the case of arbitrary $d$, but the only such recent paper we could find describes an algorithm of complexity \mbox{$\mathrm{O}(\alpha^k + n)$}, where \mbox{$\alpha = d - 1 + \mathrm{O}(\frac{1}{d})$}, $k$ is the cardinality of the desired hitting set, and $n$ is the length of the encoding of the input \cite{niederm}.
In fact, this hitting set algorithm is somewhat similar to the one in our original version of \emph{checker}.
In a forthcoming article we shall therefore carry out the formal complexity analysis of our new algorithm.\footnote{``Often, performance measures the line between the feasible and the infeasible [...] If you're talking about doing stuff that nobody's done before, one of the reasons often that they haven't done it is, because it's too time-consuming, things don't scale, and so forth.  So that's one reason, is the feasible vs.\ infeasible.'' ---Charles E.~Leiserson, Professor of Computer Science, MIT, explaining why it is important to study algorithms and their performance (Source: MIT 6.046J / 18.410J Introduction to Algorithms, Fall~2005, Lecture~1, http://www.youtube.com/watch?v=JPyuH4qXLZ0 from 23:40 to 24:40)}
We estimate its average-case runtime complexity to be \mbox{$\mathrm{O}(d^{k-2})$} with any instance of the hitting set problem for which the sets that need to be hit are of comparable density as with the sudoku minimum number of clues problem.
 
\subsection{Heuristic arguments that 16-clue sudoku puzzles do not exist}\label{subsec:heuristics}

There are two heuristic arguments as to why 16-clue puzzles should not exist, which we present here.
The first argument is statistical, while the second one is a pattern that appears to correctly predict the minimum number of clues needed with general (\mbox{$n \times n$}) sudoku.

\subsubsection{A statistical observation}

For several years now, whenever people send Gordon Royle (see the second item in Section \ref{subsec:previous}) a list of 17-clue puzzles, there are usually not too many new puzzles.
One correspondent sent 700 puzzles, and it turned out that only 33 were new.
Assuming that both Royle and this correspondent had drawn their puzzles at random from the universe of all 17-clue puzzles, Ed Russell computed the maximum likelihood estimate for the size of the universe to be approximately 34,550 at a time at which Royle's list contained around 33,000 puzzles \cite{russell1}.
In retrospect this is an underestimate; nevertheless we may infer that the 49,151 puzzles on the most recent version of Royle's list must be almost all the 17-clue puzzles in existence.

On the other hand, among all sudoku solution grids known to contain at least one 17-clue puzzle, the highest number of such puzzles in any one grid is 29.\footnote{Exactly one grid having this many 17-clue puzzles is known; further there are four grids known containing 20, 14, 12 respectively 11 puzzles with 17 clues. All other grids arising from a puzzle on Royle's list have less than ten such puzzles.}
So presently there is no sudoku grid known with 30 or more 17-clue puzzles.
Given a (hypothetical) 16-clue sudoku puzzle, adding one clue to it in all possible ways results in 65 different 17-clue puzzles all having the same solution grid.
Because of the large gap between 29 and 65, assuming that Royle's list is practically complete, when we began running compute jobs for this project, it was already clear that a 16-clue puzzle was unlikely to exist.

One additional experimental finding that supports the assumption that Royle's list is nearly complete is the following.
When considering the actual solutions of all the puzzles on this list, then there are exactly 46,294 distinct sudoku grids (up to equivalence, see Section~\ref{sec:catalog}).
We took a random sample of 1 in 5,000 sudoku solution grids, exhaustively searched all grids in this sample for 17-clue puzzles, and got a hit in exactly nine cases, which is remarkable since \mbox{$46,\!294 / 5,\!000 \approx 9.26$}.

\subsubsection{A way of computing the minimum number of clues needed for general sudoku?}\label{subsubsec:formula}

In order to describe the second heuristic we first need to introduce some notation.
With \mbox{$n \times n$} sudoku, we denote the minimum number of clues needed for a puzzle to have a unique solution by \mbox{$\mathrm{scs}(n)$}, for \emph{smallest critical set}.
In general terms, critical sets are (minimal) subsets of a structure that determine the entire structure.
There have been papers on smallest critical sets in other combinatorial structures; for example, see \cite{bean} for \mbox{$8 \times 8$} Latin squares.
We will now demonstrate how to compute \mbox{$\mathrm{scs}(n)$} from only the total number of \mbox{$n \times n$} solution grids --- the results we get are correct for all $n$ for which the value of \mbox{$\mathrm{scs}(n)$} is currently known.
Here is a summary of the literature to date:
\smallskip
\begin{center}
	\begin{tabular}{|c|c|c|c|}
		\hline
		\textbf{grid dimension} ($n \times n$) & \textbf{boxes} & \textbf{total number of solution grids} & $\mathbf{scs}(n)$ \\\hline
		$4 \times 4$ & $2 \times 2$ & $288$ & $4$ \\\hline
		$6 \times 6$ & $2 \times 3$ & $28,\!200,\!960$ & $8$ \\\hline
		$8 \times 8$ & $2 \times 4$ & $29,\!136,\!487,\!207,\!403,\!520$ & $14$ \\\hline
		$9 \times 9$ & $3 \times 3$ & $6,\!670,\!903,\!752,\!021,\!072,\!936,\!960$ & $17$ \\\hline
	\end{tabular}
\end{center}
\smallskip
For the total number of grids, \cite{wiki} has references.
A proof that \mbox{$\mathrm{scs}(4) = 4$} may be found, e.g., in \cite{taalman}.
The result that \mbox{$\mathrm{scs}(6) = 8$} is due to Ed~Russell, who, back in 2006, had checked that no proper 7-clue puzzle exists for \mbox{$6 \times 6$} sudoku \cite{russell2}.
In contrast, the \mbox{$8 \times 8$} case remained open until November~2011, when Christoph Lass of the University of Greifswald in Germany announced that he had proved that \mbox{$\mathrm{scs}(8) = 14$} \cite{lass1}.
Although Lass, too, performed a computer search, his approach was not to inspect every completed \mbox{$8 \times 8$} sudoku grid, but rather to generate all 13-clue puzzles (up to equivalence) and then see if any of them had a unique completion.
Observe now that, for these four cases,
\begin{eqnarray*}
	4 \times 6 \times 8 \quad < & 288 & < \quad 4 \times 6 \times 8 \times 10, \\
	6 \times 8 \times \cdots \times 18 \quad < & 2.82 \times 10^7 & < \quad 6 \times 8 \times \cdots \times 18 \times 20, \\
	8 \times 10 \times \cdots \times 32 \quad < & 2.91 \times 10^{16} & < \quad 8 \times 10 \times \cdots \times 32 \times 34, \\
	9 \times 11 \times \cdots \times 39 \quad < & 6.67 \times 10^{21} & < \quad 9 \times 11 \times \cdots \times 39 \times 41.
\end{eqnarray*}
So in each case, if $T$ is the total number of completed \mbox{$n \times n$} sudokus and \mbox{$C := \mathrm{scs}(n)$}, then
\begin{equation*}
	\underbrace{n \times (n + 2) \times \cdots \times (n + 2(C-2))}_{(C-1)\;\textrm{factors}} \;\; < \;\; T \;\; < \;\; \underbrace{n \times (n + 2) \times \cdots \times (n + 2(C-1))}_{C\;\textrm{factors}}.
\end{equation*}
Moreover, for \mbox{$16 \times 16$} sudoku, according to \cite{lass2}, the conjectured size of a smallest critical set is 56, while from \cite{wiki} the estimate of the total number of completed \mbox{$16 \times 16$} grids is about \mbox{$5.96 \times 10^{98}$}.
If we assume that these two values are correct, then both inequalities hold.
Note also that the first of the above inequalities is logarithmically quite sharp in all five cases just considered.
\section{The catalogue of all sudoku grids}\label{sec:catalog}

This section describes Step~1 of Section~\ref{summmethod}.
As already stated in the last section, in total there are
	$$6,\!670,\!903,\!752,\!021,\!072,\!936,\!960 \;\approx\; 6.7 \times 10^{21}$$
completed sudoku grids \cite{felgen}.
However, with this project, it was not necessary to analyze all of them.
For instance, even though relabelling the digits of a given grid yields a different grid, clearly this will not actually change the substance of the grid because the individual digits carry no significance --- any nine different symbols could be used.
There are several other such \emph{equivalence transformations}, e.g., rotating the grid or taking its transpose, or interchanging the first with the second row, none of which alter the property of containing a 16-clue puzzle.
Technically, we introduce an equivalence relation on the set of all completed sudoku grids, where two grids are equivalent if one may be obtained from the other by applying one or more of the equivalence transformations.
Whether a 16-clue puzzle exists in a particular grid is then an invariant of its equivalence class.
Hence we only needed to search one rep\-resentative from each grid equivalence class for 16-clue puzzles in order to solve the sudoku minimum number of clues problem.

\subsection{Equivalence transformations}\label{sec:EqRel}
In order to give a concise description of the equivalence transformations, we introduce the following terminology.
A \emph{band} in a sudoku grid is either the set of rows~1--3, rows~4--6, or rows~7--9; similarly a \emph{stack} denotes either the set of columns 1--3, columns 4--6, or columns 7--9.
(Just like with matrices in linear algebra, the rows and columns of a sudoku grid are numbered from top to bottom respectively left to right.)
This illustration shows the three bands and stacks in an empty grid:
\smallskip
\begin{figure}[h]
	\begin{center}
	\begin{tabular}{lr}
		\begin{minipage}{6cm}
			\begin{center}
				\includegraphics[scale=0.5]{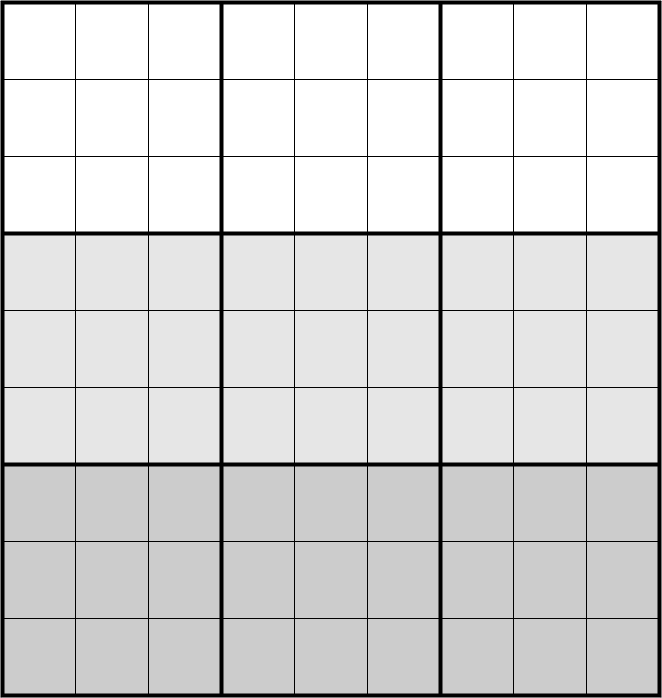}
			\end{center}
		\end{minipage}
	&
		\begin{minipage}{6cm}
			\begin{center}
				\includegraphics[scale=0.5]{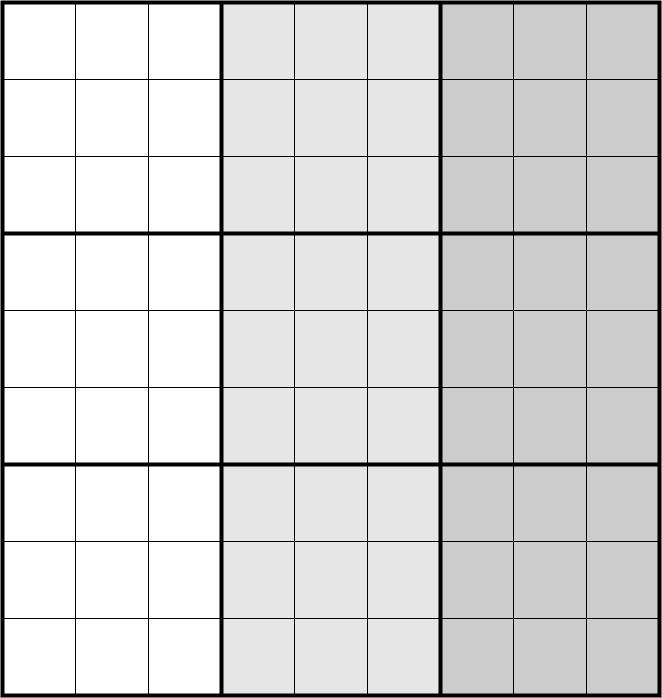}
			\end{center}
		\end{minipage}
	\end{tabular}
	\caption{The bands and stacks in an empty sudoku grid}
	\vspace{-15pt}
	\end{center}
\end{figure}

\smallskip\noindent
Here are the actual equivalence transformations, which other authors sometimes refer to as \emph{symmetry operations} --- note that reflections and rotations are already included in these:
\begin{enumerate}
\item relabelling the digits 1--9, i.e., replacing each instance of the digit $d$ by $\pi(d)$ for some \mbox{$\pi \in S_9$}, for all \mbox{$d=1, \ldots,9$};
\item permuting the rows (columns), by
  \begin{enumerate}
 	\item swapping bands (stacks), or
	\item swapping the three rows (columns) within a given band (stack);
	\end{enumerate}
\item transposing the grid.
\end{enumerate}
We call two sudoku solution grids \emph{equivalent} if one may be obtained from the other by any sequence of the equivalence transformations listed above.
It is routine to check that this definition gives rise to an equivalence relation in the mathematical sense on the set of all sudoku solution grids.
A sequence of these transformations mapping a grid to itself is called an \emph{automorphism} of the grid.
The possible automorphism groups have been determined \cite{russell3}, although the vast majority of grids do not actually possess any nontrivial automorphisms.
The largest automorphism group occurring has order 648, and is realized, up to equivalence, by exactly one grid (see Figure~\ref{fig:canonical}).

Now note that the operation of relabelling the digits commutes with all other equivalence transformations.
Also, every row operation commutes with each of the column operations.
Finally, swapping the $i^{th}$ with the $j^{th}$ row (column), \mbox{$1 \leq i < j \leq 9$}, and then transposing the resulting grid is clearly no different from first taking the transpose and then interchanging the $i^{th}$ with the $j^{th}$ column (row).
The consequence of all this is that if $G$ and $H$ are equivalent grids, then there exists a digit permutation $\pi$, as well as a row permutation $\rho$ and a column permutation $\sigma$ such that either
\begin{equation}\label{eq:equivalence}
	H \; = \; (\pi \circ \sigma \circ \rho)(G) \qquad\textrm{ or }\qquad H \; = \;(\pi \circ \sigma \circ \rho)(G^T).
\end{equation}
The following easy result shows that the property of containing a 16-clue puzzle is an invariant of the equivalence class of a grid.

\begin{lemma}
  Suppose that a sudoku solution grid $G$ has a 16-clue puzzle.
  Then all grids equivalent to $G$ contain a 16-clue puzzle.
\end{lemma}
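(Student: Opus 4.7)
The plan is to show that every equivalence transformation listed in Section~\ref{sec:EqRel} can be extended to act on \emph{puzzles} (i.e., partially filled grids) in such a way that it preserves the number of clues, validity, and uniqueness of completion. Once this is established, transporting a hypothetical 16-clue puzzle for $G$ to any equivalent grid $H$ becomes immediate.

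Concretely, I would proceed as follows. Fix a 16-clue puzzle $P$ whose unique completion is $G$, and let $H$ be any grid equivalent to $G$. By the discussion leading up to equation~(\ref{eq:equivalence}), there exist a digit permutation $\pi$, a row permutation $\rho$, and a column permutation $\sigma$ such that either $H = (\pi \circ \sigma \circ \rho)(G)$ or $H = (\pi \circ \sigma \circ \rho)(G^T)$; in both cases this composite map, call it $T$, is a bijection from completed grids to completed grids that sends valid sudoku solutions to valid sudoku solutions (since relabelling, band/stack/within-band permutations of rows and columns, and transposition each respect the row/column/box constraints). The same operations applied cellwise (and, in the case of $\pi$, digitwise) also act on partially filled grids, so I can define $T(P)$ to be the 16-clue puzzle obtained by transforming the cell positions of the clues of $P$ and relabelling their digits via $\pi$.

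It then remains to check two things: (i) $T(P)$ is indeed a sub-puzzle of $H$ with exactly 16 clues, and (ii) $T(P)$ has $H$ as its unique completion. For (i), the cell-position map underlying $T$ is a bijection of the 81 cells, so $|T(P)| = |P| = 16$, and by construction the clue values agree with the corresponding entries of $H = T(G)$. For (ii), suppose $H'$ is any valid completion of $T(P)$. Applying $T^{-1}$ (which is of the same type as $T$ and hence again preserves validity) yields a valid completion $T^{-1}(H')$ of $P$; by hypothesis, $T^{-1}(H') = G$, so $H' = T(G) = H$.

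No step should be a genuine obstacle, since each claim reduces to the fact that the basic transformations are bijections preserving the sudoku axioms. The only point that requires a little care is the bookkeeping in the transposed case, where row operations and column operations trade roles; the observation in the paragraph before equation~(\ref{eq:equivalence}) that these operations commute appropriately with transposition is exactly what is needed to make the single factorization $T = \pi \circ \sigma \circ \rho$ (possibly after an initial transpose) cover all equivalences.
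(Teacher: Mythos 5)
Your proposal is correct and follows essentially the same route as the paper: transport the puzzle $P$ via the transformation from equation~(\ref{eq:equivalence}) and show uniqueness of completion is preserved by pulling any completion back through the inverse transformation and invoking the uniqueness of $G$. The only cosmetic difference is that you argue once for the composite map $T$, whereas the paper verifies the same pull-back argument separately for transposition, row, column, and digit permutations before combining them via~(\ref{eq:equivalence}).
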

\begin{proof}
	Let \mbox{$P \subset G$} be a 16-clue puzzle.
	Then $P^T$ is a 16-clue puzzle contained in $G^T$, because if $H$ is any sudoku solution grid containing $P^T$, then \mbox{$P = (P^T)^T \subset H^T$}, but as $G$ is the unique completion of $P$ it follows that \mbox{$H^T = G$}, i.e., \mbox{$H = G^T$}, so that $G^T$ is in fact the only completion of $P^T$.
	
	Next let $\rho$ be any permutation of the rows of a grid.
	We claim that $\rho(P)$ is a proper 16-clue puzzle of $\rho(G)$.
	For if $H$ is any completion of $\rho(P)$, then \mbox{$P = \rho^{-1}(\rho(P)) \subset \rho^{-1}(H)$}, and by uniqueness of $G$ it  follows that \mbox{$\rho^{-1}(H) = G$}, i.e., \mbox{$H = \rho(G)$}.
	Hence \mbox{$\rho(G)$} is the only completion of \mbox{$\rho(P)$}.
	
	A similar argument shows that if $\sigma$ is any rearrangement of the columns and $\pi$ is any permutation of the digits 1--9, then $\sigma(P)$ and $\pi(P)$ are 16-clue puzzles of the grids \mbox{$\sigma(G)$} and \mbox{$\pi(G)$}, respectively.
	On combining everything we just noted, the statement of the lemma now follows directly from (\ref{eq:equivalence}).
\end{proof}
An obvious question to ask is: how many sudoku solution grids are there, up to equivalence?
This is a natural mathematical question in its own right, but it was especially relevant for this project since we only needed to search one representative from each grid equivalence class for 16-clue puzzles, by the preceding lemma.

\subsection{Using the sudoku group and Burnside's lemma to count equivalence classes}

The set of all equivalence transformations defined above forms a group with respect to composition~of functions, called the \emph{sudoku group}, which acts on the set of all sudoku solution grids.
The orbits under this group action are then precisely the grid equivalence classes.
Of course the reason for employing group theory is that this makes it easier to determine the total number of equivalence classes.
So if
	$$T \;:=\; \{\sigma \in S_9 \mid \forall\; 1 \leq i < j \leq 9 \,:\, \lceil i / 3 \rceil = \lceil j / 3 \rceil \rightarrow \lceil \sigma(i) / 3 \rceil = \lceil \sigma(j) / 3 \rceil \},$$
then (as a \emph{set} only --- see (\ref{eq:sudokugroup}) below) the sudoku group is \mbox{$S_9 \times T \times T \times C_2$}.
In this cartesian product, $S_9$ corresponds to the digit renumberings, the two copies of $T$ provide the row and column permuta\-tions, while $C_2$ (the cyclic group having two elements) is appended to specify if the transpose is being taken.
Note that, in the definition of $T$, the condition \mbox{$\lceil i / 3 \rceil = \lceil j / 3 \rceil$} only if \mbox{$\lceil \sigma(i) / 3 \rceil = \lceil \sigma(j) / 3 \rceil$}, for all \mbox{$1 \leq i < j \leq 9$}, ensures that every \mbox{$\sigma \in T$} preserves bands (stacks) as the index of the band (stack) containing the $i^{th}$ row (column) is given by \mbox{$\lceil i/3 \rceil$}.
Moreover, the number of elements in the group $T$ equals \mbox{$9 \times 2 \times 6 \times 2 \times 3 \times 2 = 1,\!296$} ---
if \mbox{$\sigma \in T$}, then there are nine possibilities for \mbox{$\sigma(1)$}, but once the value of \mbox{$\sigma(1)$} has been chosen, only two possibilities remain for \mbox{$\sigma(2)$} because of the requirement that \mbox{$\lceil \sigma(1) / 3 \rceil = \lceil \sigma(2) / 3 \rceil$}, and since we further need that \mbox{$\lceil \sigma(2) / 3 \rceil = \lceil \sigma(3) / 3 \rceil$}, the choice of \mbox{$\sigma(2)$} already implies the value of \mbox{$\sigma(3)$}; similarly there are six possibilities for \mbox{$\sigma(4)$}, and with \mbox{$\sigma(4)$} chosen, just two possibilities for \mbox{$\sigma(5)$} remain etc.
So \mbox{$\#T = 1,\!296 = 6^4$}, which is in line with what one would expect as there are \mbox{$3! = 6$} ways to arrange the bands, as well as $3!$ permutations of the rows in each of the three bands.
Actually, an isomorphism between \mbox{$S_3 \times S_3 \times S_3 \times S_3$} and $T$ is given by the map
	$$(\sigma_1, \sigma_2, \sigma_3, \gamma) \;\mapsto\; \bigl(\{1,\ldots,9\} \ni i \mapsto \sigma_b(i - 3(\lceil i/3 \rceil - 1)) + 3(b-1)\bigr),\quad \sigma_1,\sigma_2,\sigma_3,\gamma \in S_3,$$
where \mbox{$b := \gamma(\lceil i/3 \rceil)$} and the group multiplication in \mbox{$S_3 \times S_3 \times S_3 \times S_3$} is defined by
	$$(\sigma_1, \sigma_2, \sigma_3, \gamma) \ast (\tau_1, \tau_2, \tau_3, \lambda) \;=\; (\sigma_1 \circ \tau_{\gamma^{-1}(1)}, \sigma_2 \circ \tau_{\gamma^{-1}(2)}, \sigma_3 \circ \tau_{\gamma^{-1}(3)}, \gamma \circ \lambda),$$
for \mbox{$\sigma_i, \tau_i, \gamma, \lambda \in S_3, i = 1, 2, 3$}.
The reader may have noticed that \mbox{$S_3 \times S_3 \times S_3 \times S_3$} equipped with this particular composition law is in fact \mbox{$S_3 \;\mathrm{wr}\; S_3$}, the \emph{wreath product} of $S_3$ by itself such that $S_3$ acts on \mbox{$\{1,2,3\}$} in the natural fashion.
Even so, we shall briefly recall how to construct wreath products.

For two groups $N$ and $H$ and a homomorphism \mbox{$\psi\colon H \to \mathrm{Aut}(N)$}, the (external) \emph{semidirect product} of $N$ by $H$ with respect to $\psi$, denoted \mbox{$N \rtimes_\psi H$}, is defined like so.
As a set, \mbox{$N \rtimes_\psi H := N \times H$}, the ordinary cartesian product of $N$ and $H$; the group multiplication is given by
$$\mbox{$(n_1, h_1) \ast (n_2, h_2) \;=\; (n_1\psi(h_1)(n_2),h_1 h_2)$},$$
where \mbox{$n_1, n_2 \in N$} and \mbox{$h_1, h_2 \in H$} (a straightforward calculation confirms that all group axioms are satisfied).
The wreath product is then a special case of the semidirect product.
Suppose that $G$ and $H$ are groups such that $H$ acts on the set \mbox{$\{1,\ldots,n\}$} for some positive integer $n$, and let \mbox{$\phi\colon H \to S_n$} be the homomorphism corresponding to this (left) action.
Set \mbox{$N = G^n$}, the direct product of $n$ copies of $G$.
For each \mbox{$h \in H$}, the map \mbox{$\pi_h\colon N \to N, (g_1, \ldots, g_n) \mapsto (g_{\phi(h)(1)},\ldots,g_{\phi(h)(n)}),$} is clearly a group isomorphism, i.e., \mbox{$\pi_h \in \mathrm{Aut}(N)$}, for all \mbox{$h \in H$}.
So there is a mapping \mbox{$\pi\colon H\to \mathrm{Aut}(N), h \mapsto \pi_{h^{-1}},$} which is in fact a homomorphism (routine verfication).
The wreath product of $G$ by $H$ with respect to the associated group action, written \mbox{$G \;\mathrm{wr}\; H$}, is finally defined to be \mbox{$N \rtimes_\pi H$}, the semidirect product of \mbox{$N = G^n$} by $H$ using the homomorphism \mbox{$H \to \mathrm{Aut}(N)$} induced by the action of $H$ on \mbox{$\{1,\ldots,n\}$}.
Explicitly, the group multiplication in \mbox{$G \;\mathrm{wr}\; H$} is given by (\mbox{$g_1,\ldots,g_n,g_1',\ldots,g_n' \in G, h,h' \in H$})
$$(g_1, \ldots, g_n, h) \ast (g_1', \ldots, g_n', h') \;=\; (g_1g_{\phi(h^{-1})(1)}',\ldots,g_ng_{\phi(h^{-1})(n)}',hh').$$
After this digression, we now continue with the actual definition of the sudoku group and its action on the set of all solution grids.
In the beginning of this subsection, we already defined the sudoku group to be \mbox{$S_9 \times T \times T \times C_2$} as a set only, the reason being that the law of composition is again not simply that of the direct product, which is so because the various types of equivalence transformations do not all commute with each other.

Specifically, as already noted in the discussion right before (\ref{eq:equivalence}), the relabelling of digits commutes with all other equivalence transformations, and every row permutation commutes with every column permutation.
Thus the group generated by just these three types of transformations actually is a direct product, namely \mbox{$S_9 \times T \times T$}.
However, first permuting the rows and then taking a grid's transpose is not the same as performing these operations in reverse order; on taking the transpose first, we need to apply the respective permutation to the columns instead of the rows.
So this leads very naturally to a semidirect product because the semidirect product, really being a generalization of the direct product, provides exactly the additional flexibility needed to handle this situation.
In fact, the homomorphism \mbox{$\psi\colon C_2 \to \mathrm{Aut}(S_9 \times T \times T)$} that yields the correct semidirect product in our case is very simple --- if the group $C_2$ is generated by $g$, i.e., if $g$ is the nonidentity element of $C_2$, then under $\psi$,
$$g \;\mapsto\; \bigl( S_9 \times T \times T \ni (\pi,\rho,\sigma) \mapsto (\pi,\sigma,\rho)\bigr).$$
So $\psi(g)$ merely swaps the row and column permutations (the second respectively third component of a triple \mbox{$(\pi,\rho,\sigma) \in S_9 \times T \times T$}).
At last we can properly and fully define the sudoku group; it is
\begin{equation}\label{eq:sudokugroup}
	(S_9 \times T \times T) \rtimes_\psi C_2.
\end{equation}
This group has order \mbox{$9! \times 6^4 \times 6^4 \times 2$}.
The permutation part only (i.e., when the digit relabellings are omitted) is a subgroup of order \mbox{$6^4 \times 6^4 \times 2 = 3,\!359,\!232$}.
For defining the group action on the set of all completed sudoku grids, we identify a grid with the corresponding element of \mbox{$\mathbb{M}_{9 \times 9}(\{1,\ldots,9\})$}, i.e., we think of a grid as a \mbox{$9 \times 9$} matrix \mbox{$[a_{i,j}]$} with integral coefficients between 1 and 9.
An arbitrary member \mbox{$(\pi,\rho,\sigma,\ell)$} of the sudoku group \mbox{$(S_9 \times T \times T) \rtimes_\psi C_2$} then acts on a grid \mbox{$[a_{i,j}]$} as follows:
$$(\pi,\rho,\sigma,\ell).[a_{i,j}] \;=\; \left\{\begin{array}{cl}
	[\pi(a_{\rho^{-1}(i),\sigma^{-1}(j)})], & \ell = 1_{C_2}, \\
		
	[\pi(a_{\sigma^{-1}(j),\rho^{-1}(i)})], & \textrm{otherwise}.
\end{array}\right.$$
The check that this really defines a group action is again a completely routine calculation.
(Here, \mbox{$1_{C_2}$} denotes the identity element of the group $C_2$.)
Determining the precise number of orbits of this action can be done by appealing to Burnside's lemma, sometimes called Burnside's counting theorem or the Cauchy-Frobenius lemma.
This was carried out by Ed~Russell and Frazer~Jarvis in 2006 \cite{russell}.
Using a very clever computer calculation --- taking only one second --- they proved that there are exactly
$$5,\!472,\!730,\!538 \;\approx\; 5.5 \times 10^{9}$$
orbits.
So this is the number of equivalence classes, i.e., there are \mbox{$5,\!472,\!730,\!538$} \emph{essentially different} sudoku grids in total.
Later this result was independently established by Kjell Fredrik Pettersen.

We remark that this provides another way to see that most sudoku grids do not have any nontrivial automorphisms, as noted earlier.
As if we divide the total number of grids (see p.~\pageref{sec:catalog}) by the number of essentially different grids, we get that the average number of grids per equivalence class is
	$$\frac{6,\!670,\!903,\!752,\!021,\!072,\!936,\!960}{5,\!472,\!730,\!538} \;\approx\; 9! \times 3,\!359,\!058.6,$$
which is roughly \mbox{$99.9948\%$} of the order of the sudoku group --- in actual fact there are only \mbox{$560,\!151$} essentially different grids having nontrivial automorphism group \cite{gsf1}.

\subsection{Enumerating representatives}

For this project it was not enough to just know the number of sudoku grid equivalence classes; it was necessary to enumerate a complete set of representatives, and store these in a file.
Fortunately for us, Glenn Fowler had already written a computer programme named \emph{sudoku} \cite{gsf2} that is capable, among many other things, of enumerating all inequivalent completed sudoku grids in their respective minlex representation --- the \emph{minlex representation}, or \emph{minlex form}, of a grid is the member of its equivalence class that comes first when all grids in that equivalence class are ordered lexicographically.

Uncompressed, the \emph{catalogue} (of grids) would require about 418~GB of storage space, but Fowler also developed a data compression algorithm to store the catalogue in under 6~GB.
Note the genuinely amazing compression rate; each grid occupies on average only a little more than one byte.
Fowler has kindly shared his executables, and we have used them in order to generate and compress the complete list of essentially different sudoku solution grids in minlex form.
We were thus able to store the entire catalogue on a single DVD.

We verified the completeness of the catalogue as follows.
We wrote a small tool that computes the minlex form of a given sudoku solution grid, and then used this tool to double-check that each grid in the catalogue actually was in minlex form already.
Further, we made sure that the catalogue itself was ordered lexicographically and that there were no repetitions.
In this way we knew that no two grids in the catalogue were members of the same equivalence class.
Because of the result on the total number of equivalence classes from the last subsection, that was all it took to be certain that the catalogue was really complete.
\section{Overall strategy of checker}\label{sec:checker}

This section concerns Step~2 of Section~\ref{summmethod}.
The obvious algorithm for exhaustively searching a given completed sudoku grid for 16-clue puzzles is to simply consider all subsets of size~16 of that grid and test if any have a unique completion.
Even though such a \emph{brute force} algorithm is effective in that the minimum number of clues problem could in principle be solved that way, the actual computing power required would be far too great.
In fact, searching just one grid in this way would be infeasible as
	$$\binom{81}{16} \;\approx\; 3.4 \times 10^{16}.$$
Fortunately, with just a little theory, the number of possibilities to check can be reduced quite dramat\-ically.
Very briefly, it is possible to identify regions (subsets) in a given completed sudoku grid, called \emph{unavoidable sets}, that must always have at least one clue from any proper puzzle in that grid; once all the unavoidable sets have been identified for the given grid, the problem of finding 16-clue puzzles in that grid can be solved by finding \emph{hitting sets} --- sets of size~16 that intersect (hit) each member of the collection of unavoidable sets.
This is a much smaller problem, i.e., the approach just outlined greatly reduces the compute time taken per grid compared to the brute force way.

In practice, we use only a restricted collection of unavoidable sets, namely those having no more than twelve elements.
Obviously any proper 16-clue puzzle necessarily hits in particular the unavoidable sets in this subcollection, but not conversely --- a hitting set for this smaller collection may miss other unavoidable sets, of size~13 or greater.
So the hitting sets for this collection are, a priori, merely \emph{trial} (or \emph{candidate}) sudoku puzzles, in the sense that a hitting set will not usually be a proper 16-clue puzzle but every proper 16-clue puzzle will be found in this way.
For this reason we need to perform the following extra step: each hitting set enumerated, i.e., each trial 16-clue sudoku puzzle produced, is run through a sudoku solver procedure, to see if there are multiple solutions; this effectively checks if there are any unavoidable sets in the grid having empty intersection with the hitting set in question.
Thus the overall strategy we use in our programme \emph{checker} to exhaustively search for 16-clue puzzles in a sudoku solution grid may be summarized as follows:
\begin{enumerate}
	\item[(C1)] find a sufficiently powerful collection of unavoidable sets for the given grid;
	\item[(C2)] enumerate all hitting sets of size~16 for this collection, i.e., find every possible combination of 16~clues that intersect all the unavoidable sets found in Step~C1;
	\item[(C3)] check if any of the hitting sets found in the previous step is a proper 16-clue puzzle, i.e., test if any of these hitting sets uniquely determine the given grid, by running each hitting set through a sudoku solver procedure.
\end{enumerate}
Importantly, these three steps do not equally contribute to the running time of \emph{checker}.
It is Step~C2, the enumeration of hitting sets, that typically consumes the bulk (around 95 per cent) of CPU cycles, which however should come as no surprise as the hitting set problem is NP-complete \cite{karp}.
Of course, being NP-complete is an asymptotic statement, so it does not really say anything about the difficulty of a problem of a fixed size.
On the other hand, experience suggests that \mbox{$k=16$} is often large enough a hitting set size for a standard backtracking algorithm to be too inefficient to solve billions of instances of the hitting set problem, as it indeed turned out.
We will now provide an introductory description of Steps~C1, C2 and C3; all details are given in Sections~\ref{sec:unavsets}--\ref{sec:newalgo}.

\subsection{Unavoidable sets}\label{subsec:unavsets}

This subsection provides a sketch of Step~C1.
The idea of an \emph{unavoidable set} in a completed sudoku grid is easily explained by example.\footnote{Later we learned that the analogous concept had been defined previously in the study of Latin squares, where what we refer to as unavoidable sets are called \emph{Latin trades}.}
So let us consider the following grid:
\begin{figure}[h]
	\begin{center}
	\includegraphics[scale=0.85]{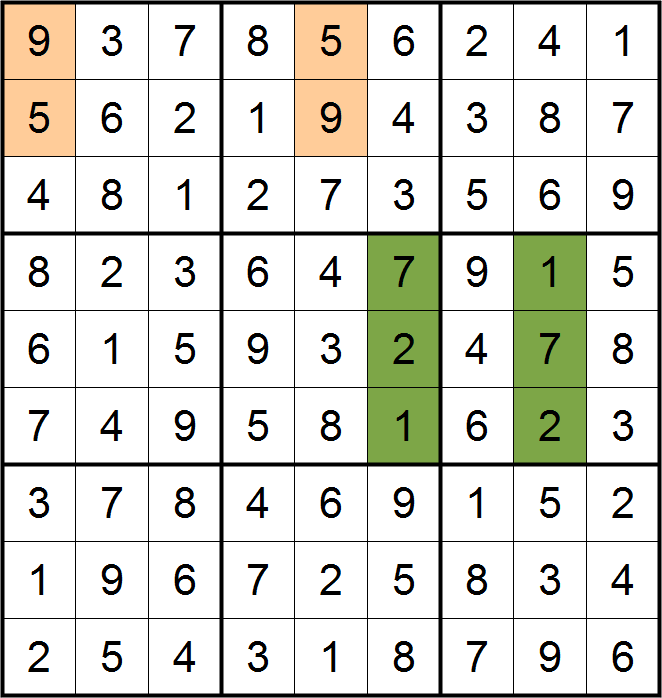}
	\caption{Two unavoidable sets in a sudoku solution grid}
	\vspace{-15pt}
	\end{center}
\end{figure}

\noindent
The reader can see that if the digits 5 and 9 are interchanged \emph{among the four orange cells only} (rows~1 and 2, columns~1 and 5), then a different correctly completed grid is obtained.
In other words, deleting these four digits results in a 77-clue sudoku puzzle with two completions.
Accordingly, in any puzzle with the above grid as the only possible answer, one of the four orange cells must contain a clue since a puzzle not having any of these four as a clue would have at least two completions and therefore not be a valid puzzle.
We say that the set of the orange cells is \emph{unavoidable} --- we cannot avoid having a clue from it.
Another example may be found in columns~6 and 8.
Replacing the contents of the three green cells in column 6 with those of the three green cells in column 8, i.e., swapping the digits in the cells in the middle band in column 6 with those in column 8, again yields a different sudoku solution grid.
So the six green cells (rows 4--6, columns 6 and 8) also form an unavoidable set.

In general, a subset of a sudoku solution grid is called unavoidable if it is possible to permute the contents of its cells, leaving the other cells unchanged, such that a different grid results.
(We will give a more formal definition in Section~\ref{sec:unavsets}.)
Directly from this definition, if a set of clues in a grid does not intersect every unavoidable set, then it cannot be used as the set of starting clues for a sudoku puzzle since there will be more than one solution.
Equivalently, any set of clues for a proper puzzle must use at least one clue from every unavoidable set.
By the way, the proof that seven clues are never enough for a unique completion (given in Section~\ref{sec:intro}) can be seen as follows: the \mbox{$\binom{9}{2}$} subsets consisting of the eighteen occurrences of two digits in a grid are all unavoidable, and seven clues must always miss one of these sets.

\subsubsection*{Finding unavoidable sets}

We find unavoidable sets in a given completed sudoku grid using a pattern-matching algorithm --- in \emph{checker}, there is a list of essentially different (with respect to the equivalence transformations defined in Section~\ref{sec:catalog}) types of unavoidable sets of size~12 or below; the actual search for unavoidable sets in a sudoku grid goes by effectively generating all grids equivalent to the given grid, and then comparing the list of essentially different types of unavoidable sets with each grid generated.
In case of a match, the corresponding unavoidable set in the original (given) sudoku solution grid is obtained by applying the inverse of the transformation used to generate the grid that gave the match.
Our \emph{checker} of 2006 took about 30~seconds per grid to find these unavoidable sets.
However, back then a lot of effort was duplicated by the procedure $\mathrm{FindUnavoidableSets}$.
The logic in the new \emph{checker} eliminates most of this redundancy (see Section~\ref{subsec:blueprints}), and takes less than one twentieth of a second for the same task.

\subsubsection*{How the unavoidable sets influence the running time of checker}

As mentioned earlier, when searching a sudoku solution grid for 16-clue puzzles using \emph{checker}, most of the compute time is usually spent on enumerating hitting sets for the collection of unavoidable sets.
So the family of unavoidable sets --- the \emph{structure} of the grid --- is what determines how computationally intensive each grid is.
A typical grid has around 360 unavoidable sets with up to twelve elements.
However, there is considerable variability in the structure between grids --- some grids have as many as 500 unavoidable sets (of size~12 or below), while others have fewer than 200 --- and thus the time it takes to search for a 16-clue puzzle also varies considerably among grids.
One might think that those grids having the most unavoidable sets are the quickest to search through, but that is not actually the case.
Interestingly, it turns out that grids with many unavoidable sets often have relatively few small unavoidable sets (of either size~4 or size~6), and the latter are an important factor in the running time of \emph{checker}.

\subsection{Hitting sets}\label{subsec:sketchC2}

In this subsection we provide a sketch of Step~C2, i.e., we will outline how we enumerate hitting sets.
Assume therefore that we have already found a sufficiently powerful family of unavoidable sets for a given sudoku grid.
As explained in the beginning of this section, if a 16-clue puzzle exists within this grid, then it intersects (hits) each of the unavoidable sets.
So we are faced with a common situation in computational discrete mathematics --- we need to solve an instance of the \emph{hitting set problem}.
Recall that the hitting set problem is formally defined as follows: given a finite set $U$ (the \emph{universe}), a family \mbox{$\mathcal{F}$} of subsets of $U$, and a positive integer $k$, the task is to find all hitting sets for $\mathcal{F}$ having $k$ elements, where a hitting set for $\mathcal{F}$ is simply a subset of $U$ that intersects every one of the sets in $\mathcal{F}$.
In our case, the universe $U$ is the given sudoku solution grid, the family $\mathcal{F}$ of subsets of $U$ consists of some of the unavoidable sets for this grid, and \mbox{$k = 16$} as we are interested in hitting sets having 16~elements.
We reiterate that for \emph{checker} to work as expected, i.e., for the search to be exhaustive, it is not essential to use \emph{all} unavoidable sets; any subcollection, no matter how small, of the collection of all unavoidable sets will produce correct results, by Step~C3.
For the actual computations we of course worked with a collection that minimizes the running time of \emph{checker} --- it turned out that the family of unavoidable sets with at most twelve elements gives the overall best results.

So we needed an efficient hitting set algorithm for this project.
In the original version of \emph{checker} we used the obvious backtracking algorithm, modified such that every hitting set is enumerated once only; full details of that are given in Section~\ref{sec:orgalgo}.
The hitting set algorithm in the new \emph{checker} has three main improvements over the original algorithm, which are explained in Section~\ref{sec:newalgo}.
The most important of these three improvements will be introduced now.
It is a technique to prune the search tree so as to abandon the traversal of a particular branch earlier than usual --- with a standard hitting set algorithm, detection of failure, and hence backtracking, does not begin until after the last (\mbox{$k^{th}$}) element has been added to the hitting set under construction.

\subsubsection*{Pruning the search tree using higher-degree unavoidable sets}

There is a natural generalization of the concept of an unavoidable set in a sudoku solution grid, which we call higher-degree unavoidable set --- a \emph{higher-degree} unavoidable set is a subset of the given grid from which more than one clue (say $m$ clues) is needed for a uniquely completable puzzle; the \emph{degree} of such a set is $m$.
For example, an unavoidable set of degree~3 is an unavoidable set such that every puzzle in the grid in question must have at least three elements from that set in order to have a unique solution.
So suppose that, when enumerating hitting sets of size~\mbox{$k > 1$} for a family $\mathcal{F}$ of unavoidable sets in a sudoku grid $G$, we are further given families \mbox{$\mathcal{F}^{(2)}, \ldots, \mathcal{F}^{(k)} \subset 2^G$} such that each member of \mbox{$\mathcal{F}^{(d)}$} is unavoidable of degree~$d$, \mbox{$d = 2, \ldots, k$}.
We will now show how these families of higher-degree unavoidable sets provide a very simple way to facilitate earlier backtracking, thereby speeding up the algorithm quite significantly.
Recall that, in general, backtracking begins as soon as it is clear that the solution being constructed (partial solution) cannot possibly be completed to an actual solution to the problem in question.

Assume that \mbox{$k-1$} clues have been selected already, i.e., suppose that \mbox{$k-1$} elements have so far been added to the hitting set under construction.
If there exists a member of \mbox{$\mathcal{F}^{(2)}$} (an unavoidable set of degree~2) not containing any of the \mbox{$k-1$} clues of our partial hitting set, then we may immediately stop and backtrack --- we already know that this \mbox{$(k-1)$}-clue set can never be completed to a $k$-clue hitting set, because any unavoidable set of degree~2 requires by definition at least two clues but there is only one more clue left to draw.
(In other words, no matter how the \mbox{$k^{th}$} clue is chosen, the resulting puzzle will have multiple completions.)
Similarly, if after selecting \mbox{$k-2$} clues there is a set in \mbox{$\mathcal{F}^{(3)}$} that does not contain any of the clues picked so far, then we can again backtrack immediately as each element of \mbox{$\mathcal{F}^{(3)}$} is an unavoidable set of degree~3 and therefore requires at least three clues; with only two clues to go, there is no point continuing.
More generally, if \mbox{$k-d+1$} clues have been drawn for some $d$, \mbox{$1 < d \leq k$}, and there is a set in \mbox{$\mathcal{F}^{(d)}$} that does not contain any of these clues, i.e., if there is an unavoidable set of degree~$d$ not yet hit, then we do not need to branch any further but may instead backtrack at once.

How do we actually obtain higher-degree unavoidable sets?
By taking disjoint unions of ordinary unavoidable sets.
As an example, consider once again the grid shown at the beginning of Section~\ref{subsec:unavsets}.
In this grid, two unavoidable sets are highlighted, one having four elements (the orange cells), and the other having six elements (green cells).
Note that these two unavoidable sets are disjoint, so that it is impossible to hit both with just a single clue.
Therefore at least two clues are required, i.e., the union of these two sets is an unavoidable set of degree~2.
We call a pair of disjoint unavoidable sets a \emph{clique of size~2}.
The general case is if there are \mbox{$d > 1$} pairwise disjoint unavoidable sets (a \emph{clique of size $d$}); at least $d$ clues are then necessary to hit them all, and thus the union of these $d$ sets is an unavoidable set of degree~$d$.
Hence, after \emph{checker} has found some degree~1 unavoidable sets, all we need to do to obtain a sizeable collection of higher-degree unavoidable sets is search for cliques.
Say $\ell$ unavoidable sets of degree~1 have been found.
Then for each \mbox{$d = 2, \ldots, k$}, where $k$ is again the size of the desired hitting set, we simply consider all \mbox{$\binom{\ell}{d}$} combinations of $d$ unavoidable sets and check for each if the $d$ unavoidable sets in question are pairwise disjoint.
If so, then we have found a clique of size~$d$, and so we just take the union of these $d$ unavoidable sets to produce a degree~$d$ unavoidable set.
In practice, it turns out to be optimum to use cliques of size~2, 3, 4 and 5 only, so that detection of failure does not happen until at least twelve clues have been picked.

It may be surprising that such a simple observation can actually make an important difference, but with \emph{checker}, on the algorithmic side (as opposed to the implementation side), this was really the key ingredient in making the exhaustive search for a 16-clue sudoku puzzle feasible.\footnote{``In a [sudoku] puzzle, what you haven't looked at yet is probably where you're meant to make progress. And in science it's what you haven't looked at yet often times.'' ---Thomas Snyder (a.k.a.\ \emph{Dr.~Sudoku}), three-times sudoku world champion, explaining similarities between solving a sudoku and carrying out scientific research (Source: http://www.youtube.com/watch?v=WMNal53nBtE from 3:25 to 4:18)}
Finally, we refer to these cliques as \emph{trivial} higher-degree unavoidable sets.
Actually, there are also examples of \emph{nontrivial} higher-degree unavoidable sets, e.g., unavoidable sets of degree~2 that are not merely the union of two disjoint unavoidable sets of degree~1, see Section~\ref{subsec:higherdeg}.
Owing to time constraints, however, we did not fully investigate the usefulness of such nontrivial higher-degree unavoidable sets, and in the end these did not play a very important role in the computation.

\subsection{Running all 16-clue candidate puzzles through a sudoku solver}\label{subsec:solver}

In this subsection we explain Step~C3.
We required a sudoku solver module (procedure) for \emph{checker}, because we had to test all hitting sets found for a unique completion, the reason being that we worked with a subcollection of the collection of all unavoidable sets.
To avoid doing something in a mediocre way that others had already excelled at, we did not actually develop our own sudoku solver.
Instead, in the original \emph{checker}, we used the public domain solver \emph{suexk} written by Guenter Stertenbrink \cite{stertenbrink}.
However, for the new version of \emph{checker}, we switched to the open-source solver \emph{BBSudoku}, by Brian Turner \cite{briturner}.
At the time we had to decide which solver to use, Turner's was the fastest one available, to our knowledge.
Note that Stertenbrink's algorithm is based on an exact cover problem solver, while Turner's approach is to solve a sudoku puzzle much like a human would: at every step, the idea is to first look for naked singles, then hidden singles, then locked candidates, then to check for an X-Wing, Swordfish, etc., and finally to make a guess, i.e., perform trial and error.
For the actual computations, we slightly modified Turner's solver.
We removed all of the advanced solving methods in order that the logic would be as simple as possible.
The modified solver consisted of only about 150~statements, yet was capable of checking over 50,000 16-clue trial puzzles for a unique completion per second, a testament to Turner's design.
In fact, not making use of the advanced solving techniques produces the fastest possible version of this solver, as the author also remarks in a comment in the source code.

\section{Some theory of unavoidable sets}\label{sec:unavsets}

In Section~\ref{subsec:unavsets} we gave a brief introduction to unavoidable sets, and outlined how we go about finding them in a given completed sudoku grid.
We said that the algorithm in the original version of \emph{checker} did a lot of redundant work compared with the new \emph{checker}.
We will now make all this precise, and in particular we explain how investigating the theory of unavoidable sets makes it possible to remove the aforementioned redundancy relatively easily.

\subsection{Definition and elementary properties}

Throughout this section, a sudoku solution grid will formally be a function
	$$\{0, \ldots, 80\} \to \{1,\ldots,9\},$$
and when we say ``let $X$ be a subset of a sudoku solution grid $G$'' we identify $G$ with the corresponding subset of the cartesian product \mbox{$\{0,\ldots,80\} \times \{1,\ldots,9\}$}.

\begin{defn}\label{def:UA}
	Let $G$ be a sudoku solution grid.
	A subset \mbox{$X \subset G$} is called an \emph{unavoidable set} if \mbox{$G \backslash X$}, the complement of $X$, has multiple completions.
	An unavoidable set is said to be \emph{minimal} if no proper subset of it is itself unavoidable.
\end{defn}

So if a subset of a sudoku solution grid does not intersect every unavoidable set contained in that grid, then it cannot be used as a set of clues for a sudoku puzzle because there will be more than one solution.
In other words, a set of clues for a valid puzzle \emph{must} contain at least one clue from every unavoidable set.
In fact, the converse is true as well:

\begin{lemma}
Suppose that \mbox{$X \subset G$} is a set of clues of a sudoku solution grid $G$ such that $X$ hits (intersects) every unavoidable set of $G$.
Then $G$ is the only completion of $X$.
\end{lemma}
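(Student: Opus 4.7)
The plan is to prove the contrapositive: if $X$ has more than one completion, then I exhibit an unavoidable set of $G$ that $X$ fails to hit. The guiding intuition is that any second completion $G'$ of $X$ must agree with $G$ on every cell of $X$ (so that $X$ is literally contained in $G'$) but disagree with $G$ somewhere else, and the ``disagreement region,'' viewed as a subset of $G$, is itself unavoidable.

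Concretely, suppose for contradiction that there is a sudoku solution grid $G' \neq G$ with $X \subset G'$. Set
$$Y \;:=\; \{(i, G(i)) \in G \;:\; G(i) \neq G'(i)\},$$
the subset of $G$ consisting of those cells at which $G$ and $G'$ disagree. Two short observations finish the argument. First, $Y$ is unavoidable in the sense of Definition~\ref{def:UA}: its complement $G \setminus Y$ consists precisely of the cells where $G$ and $G'$ take the same value, so both $G$ and $G'$ are completions of $G \setminus Y$, and they are distinct because $G \neq G'$ forces $Y$ to be nonempty. Second, $X$ is disjoint from $Y$: every clue $(i, v) \in X \subset G$ also lies in $G'$, so $v = G(i) = G'(i)$ and hence $(i, G(i)) \notin Y$. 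Therefore $X$ misses the unavoidable set $Y$, contradicting the hypothesis.

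The argument is essentially a direct unpacking of Definition~\ref{def:UA}, so I do not anticipate a real technical obstacle. The only place where care is needed is the bookkeeping imposed by the formalism introduced just before the definition, where a grid is identified with a subset of $\{0,\ldots,80\} \times \{1,\ldots,9\}$: one must consistently treat a ``cell together with its value'' as a single element $(i,G(i))$, so that phrases like ``$X$ hits $Y$'' refer to intersection of these pairs rather than just of cell positions. Once this identification is kept straight, the two bullet points above complete the proof in only a few lines.
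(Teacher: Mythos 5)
Your proof is correct and follows essentially the same route as the paper: argue by contradiction by exhibiting an unavoidable set disjoint from $X$. The paper's witness is simply $G \setminus X$ itself (its complement is $X$, which by assumption has multiple completions, so it is unavoidable and clearly unhit by $X$), whereas you use the smaller disagreement set $Y \subseteq G \setminus X$, which works equally well but needs the extra (correct) bookkeeping you supply.
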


\begin{proof}
If $X$ had multiple completions, then \mbox{$G \backslash X$} would be an unavoidable set not hit by $X$, which is a contradiction.
\end{proof}

In the original version of \emph{checker}, unavoidable sets in a given sudoku solution grid were found using a straightforward pattern-matching algorithm.
More specifically, \emph{checker} contained several hundred different \emph{blueprints}, where a blueprint is just a representative of an equivalence class of (minimal) unavoidable sets, the equivalence relation again being the one from Section~\ref{sec:EqRel}.
On the online forums, Ed~Russell had investigated unavoidable sets and compiled a list of blueprints, which he sent us.
We added all blueprints of size 12 or less from Russell's list to \emph{checker} (525 blueprints in total).
When actually finding unavoidable sets in a completed sudoku grid, \emph{checker} would compare each blueprint against all grids in the same equivalence class as the given grid, modulo the digit permutations.
That is, \emph{checker} would generate 3,359,232~grids equivalent to the original grid, as explained in Section~\ref{sec:catalog}, and for every grid generated \emph{checker} would try each blueprint for a match.
In total, this yields around 360~unavoidable sets with a typical grid.

Using the algorithm just described, finding unavoidable sets takes approximately 30~seconds per grid.
In 2006 we did not consider this a problem because back then \emph{checker} took over an hour on average to scan a grid for all 16-clue puzzles, so that searching the entire sudoku catalog was completely out of question anyway.
However, once we succeeded in efficiently enumerating the 16-clue hitting sets, in order to make this project feasible we also had to come up with a faster algorithm for finding the unavoidable sets.
While we kept our original pattern-matching strategy, a bit of theory helped us to significantly reduce the number of possibilities to check.
Before we can start developing the theory of unavoidable sets, we need one more definition.

\begin{defn}
	Let $G$ be a sudoku solution grid, and let \mbox{$X \subset G$}.
	We say that a digit $d$, \mbox{$1 \leq d \leq 9$}, appears in $X$ if there exists \mbox{$c \in \{0, \ldots, 80\}$} such that \mbox{$(c,d) \in X$}.
	Similarly we say that a cell $c$ is contained in $X$ if \mbox{$(c,d) \in X$} for some $d$.
\end{defn}

\begin{lemma}\label{lemma:unav}
	Let $G$ be a sudoku solution grid and suppose that \mbox{$U \subseteq G$} is a minimal unavoidable set.
	If $H$ is any other completion of \mbox{$G \backslash U$}, \mbox{$H \neq G$}, then $G$ and $H$ differ exactly in the cells contained in $U$.
	In particular, every digit appearing in $U$ occurs at least twice.
\end{lemma}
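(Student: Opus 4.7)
The plan is to prove the two assertions in turn: the first by exploiting minimality of $U$, and the \emph{in particular} clause by a conservation-of-counts argument followed by a row-and-column coincidence.

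For the first claim, observe that because $H$ is a completion of $G \setminus U$, the grids $G$ and $H$ necessarily agree on every cell not contained in $U$; the only question is whether they can also coincide at some cell contained in $U$. I would argue by contradiction: suppose $H(c) = G(c) = d$ for some cell $c$ contained in $U$, and set $U' := U \setminus \{(c,d)\}$. Then $G \setminus U' = (G \setminus U) \cup \{(c,d)\}$ is a subset of $H$ since $H(c) = d$, so $H$ is still a completion of $G \setminus U'$. Because $H \neq G$, this makes $U'$ an unavoidable proper subset of $U$, contradicting the minimality of $U$.

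For the digit-multiplicity claim, the key observation is a conservation identity: for every digit $d$, the number $n_d$ of cells of $U$ at which $G$ takes value $d$ equals the number $m_d$ at which $H$ takes value $d$. Indeed, $d$ occurs nine times in each of $G$ and $H$, and the occurrences outside the cells of $U$ coincide since $G$ and $H$ agree there, forcing the counts inside $U$ to match. Now assume for contradiction that some digit $d$ appears in $U$ exactly once, at a cell $c_0$. Then $m_d = n_d = 1$, so there is a unique cell $c_1$ of $U$ with $H(c_1) = d$, and by the first claim $H(c_0) \neq d$, so $c_1 \neq c_0$.

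The contradiction will come from locating $c_1$ inside both the row and the column of $c_0$. In the row of $c_0$, the digit $d$ must appear exactly once in $H$; on cells outside $U$ in that row, $H$ agrees with $G$, and $G$ places $d$ in that row only at $c_0$, so $H$'s copy of $d$ in this row must sit at a cell of $U$ other than $c_0$, and the only such cell is $c_1$. Running the identical argument for the column of $c_0$ places $c_1$ in that column as well, forcing $c_1 = c_0$, the desired contradiction. The main conceptual step is noticing that minimality (via the first claim) together with the $n_d = m_d$ identity reduces the multiplicity assertion to a purely combinatorial statement about where a digit can appear in a single row or column; everything else is bookkeeping.
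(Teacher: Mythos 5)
Your proof is correct and follows essentially the same route as the paper: minimality of $U$ rules out any agreement of $G$ and $H$ on a cell of $U$, and then a row/column argument shows a digit occurring only once in $U$ could not change, i.e.\ would have to stay put. Your explicit counting identity $n_d = m_d$ is just a careful rendering of the paper's observation that passing from $G$ to $H$ permutes the contents of the cells of $U$, so the two arguments coincide in substance.
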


\begin{proof}
	If $G$ and $H$ agreed in more cells than those contained in \mbox{$G \backslash U$}, then $U$ would not be minimal as it would properly contain the unavoidable set \mbox{$G \backslash (G \cap H)$}.	
	So when moving between $G$ and $H$, the contents of the cells in $U$ are permuted such that the digits in all cells of $U$ change.
	If there was a digit $d$ contained in only one cell of $U$, that digit could neither move to a different row nor to a different column, since otherwise the row respectively column in question would not contain the digit $d$ anymore at all.
	That is, the digit $d$ stays fixed, in contradiction to what we just noted.
\end{proof}

For our first corollary, recall that a \emph{derangement} is a permutation with no fixed points, i.e., an element \mbox{$\sigma \in S_n$}, \mbox{$n \in \mathbb{N}$}, such that \mbox{$\sigma(k) \neq k$}, for all \mbox{$k = 1, \ldots, n$}.

\begin{cor}\label{cor:derangement}
	Let $G$ be a sudoku solution grid and suppose that \mbox{$U \subseteq G$} is a minimal unavoidable set.
	If $H$ is any other completion of \mbox{$G \backslash U$}, \mbox{$H \neq G$}, then $H$ may be obtained from $G$ by a derangement of the cells in each row (column, box) of $U$.
	Hence the intersection of $U$ with any row (column, box) is either empty or contains at least two elements.
\end{cor}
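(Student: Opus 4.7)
The plan is to fix a single row (the arguments for columns and boxes are identical) and exploit the fact that in both $G$ and $H$ that row must contain each digit from $1$ to $9$ exactly once. Concretely, let $r$ be a row that meets $U$ nontrivially, and let $U_r \subseteq U$ be the cells of $U$ lying in row $r$. By Lemma~\ref{lemma:unav}, $G$ and $H$ agree on every cell outside of $U$, so in particular the cells of row $r$ that are \emph{not} in $U_r$ contain the same digits in $H$ as in $G$. Since row $r$ as a whole must contain each digit exactly once in both completions, the digits appearing on $U_r$ in $H$ form the same set as the digits appearing on $U_r$ in $G$. Hence, restricted to $U_r$, $H$ is obtained from $G$ by some permutation $\sigma_r$ of these digits among the cells of $U_r$.

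Next I would invoke the ``differ in exactly the cells of $U$'' part of Lemma~\ref{lemma:unav}: for every cell $c \in U_r$ the value of $c$ in $H$ is distinct from its value in $G$. This is exactly the statement that $\sigma_r$ has no fixed point, i.e.\ $\sigma_r$ is a derangement of $U_r$. Running this argument unchanged for any column or box of $U$ gives the first assertion of the corollary: passing from $G$ to $H$ acts as a derangement on the cells of $U$ in every row, column, and box.

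For the final sentence I would simply note that a derangement of a one-element set does not exist, since the unique element would have to map to itself. Therefore $|U_r| \neq 1$ for every row $r$, so $U_r$ is either empty or has cardinality at least $2$; the same holds for columns and boxes. I do not anticipate any real obstacle here: the only subtle point is the observation that the digits appearing in $U_r$ are forced to be the same set in $G$ and $H$, and this is an immediate consequence of the sudoku row constraint together with the ``agrees outside $U$'' half of Lemma~\ref{lemma:unav}.
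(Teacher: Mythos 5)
Your proposal is correct and follows essentially the same route as the paper: the paper's own proof is a one-line appeal to Lemma~\ref{lemma:unav} together with the sudoku row/column/box constraints, and your argument simply spells out that reasoning in detail (the digits on $U$ within a row are forced to be permuted among themselves, every cell of $U$ changes, hence a derangement, hence no singleton intersection). No gaps; your write-up is just a fuller version of the paper's terse justification.
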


\begin{proof}
	Follows directly from the last lemma and since the rules of sudoku would be violated otherwise.
\end{proof}

\subsection{Finding minimal unavoidable sets efficiently using blueprints}\label{subsec:blueprints}

The basic idea how to make the actual search for minimal unavoidable sets in a given sudoku solution grid faster is to replace the blueprints from Russell's list by appropriate members of the same equivalence class.
These members are chosen such that only a fraction of grids equivalent to the given one need to be checked for a match.
We will call a blueprint an \mbox{$m \times n$}~blueprint if it hits $m$ bands and $n$ stacks of the \mbox{$9 \times 9$} matrix, and we treat the blueprints according to the number of bands and stacks they hit.
By taking the transpose if necessary, it is no loss of generality to assume that \mbox{$m \leq n$}.

Suppose that \mbox{$m = 1$}, i.e., suppose that a blueprint hits only a single band.
First observe that by Lemma~\ref{lemma:unav}, a \mbox{$1 \times 1$}~blueprint cannot exist, as any digit in a minimal unavoidable set appears at least twice.
So \mbox{$n \geq 2$}.
By swapping bands if necessary, we may then assume that (up to equivalence) only the top band is hit.
Moreover, after possibly permuting some of the rows and/or columns, we may in fact assume that two of the cells of the blueprint are as follows, again because any digit appearing in a minimal unavoidable set occurs at least twice:
\begin{figure}[h]
	\begin{center}
	\includegraphics[scale=0.85]{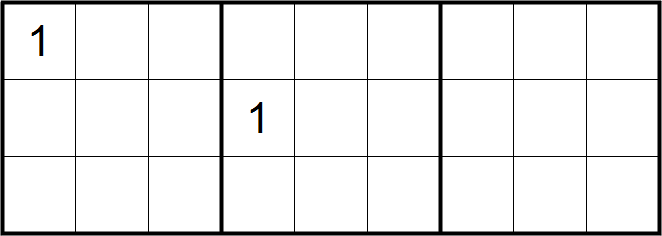}
	\caption{Two digits in any $1 \times n$ blueprint}\label{fig:unav}
	\vspace{-15pt}
	\end{center}
\end{figure}

\noindent For a \mbox{$1 \times 3$} blueprint we further choose, if possible, a representative that has no cells in either the middle or the right column of the right stack.

When actually searching for instances of \mbox{$1 \times 2$}~blueprints in a given grid, i.e., when generating those representatives of the given grid that need to be considered in order to find all occurrences of a \mbox{$1 \times 2$}~blueprint, there will be three possibilities for the choice of top band as well as six permutations of the three rows within the top band (once a band has been chosen as the top band).
So in total there are 18~different arrangements of the rows. 
Compare this to 1,296~row permutations with the original algorithm.
For the columns, a priori, we have to consider all six possible arrangements of the three stacks, as well as all six permutations of the columns in the left stack.
However, the majority of \emph{$1 \times 2$}~blueprints have a stack containing cells in only one column of that stack.
Therefore, if we choose such a stack as the middle stack, then only the left column in the middle stack will contain cells of the blueprint.
So we do not in fact consider all six permutations of the three columns in the middle stack; rather, we try each of the three columns as the left column once only, which means that we use only one (random) arrangement of the two remaining columns in the middle stack.

Of course this will, with probability~\mbox{$0.5$}, miss instances of those blueprints that have digits in the other two columns of the middle stack, which is why all such blueprints are actually contained twice in \emph{checker}, with the respective columns swapped.
Perhaps this is best explained by the following example of a \mbox{$1 \times 2$}~blueprint of size 10, which is saved twice in \emph{checker}'s table of blueprints:

\begin{figure}[h]
	\begin{center}
	\begin{tabular}{lcr}
	\begin{minipage}{6.5cm}
	\begin{center}
	\includegraphics[scale=0.85]{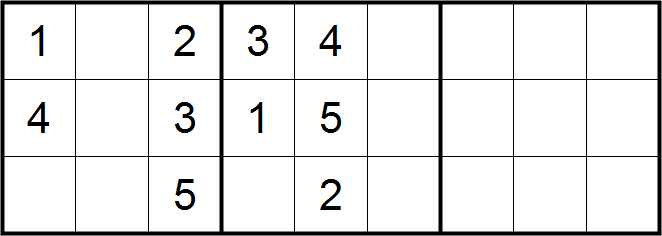}
	\end{center}
	\end{minipage}
	
	&
	\hspace{0.2cm}
	&
	
	\begin{minipage}{6.5cm}
	\begin{center}
	\includegraphics[scale=0.85]{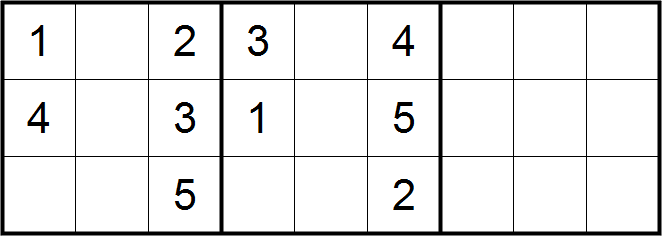}
	\end{center}
	\end{minipage}
	\end{tabular}
	\caption{A $1 \times 3$ blueprint of size~10 that occurs twice in \emph{checker}'s list of blueprints}
	\vspace{-15pt}
	\end{center}
\end{figure}

\noindent
In total, 108~different permutations of the columns will be considered.
However, only in one out of six cases do we actually need to match all the \mbox{$1 \times 2$} blueprints against the corresponding grid, since all our blueprints have the same digit in the top-left cell and in the fourth cell of the second row as explained at the beginning of this subsection (see Figure~\ref{fig:unav}).
Summing up the above discussion, in order to find all \mbox{$1 \times 2$} unavoidable sets in a given sudoku solution grid, instead of having to generate 3,359,232~grids, in actual fact we only need to generate 648~grids.

We search for \mbox{$1 \times 3$} unavoidable sets in a very similar manner; the only additional effort required is that we also need to permute the columns in the right stack of the grid.
Like with the middle stack, we do not actually try all six permutations, but only three permutations --- each of the three columns in the right stack is selected as the left column exactly once.
Since this would again miss half of those unavoidable sets having clues in multiple columns of the right stack, the respective blueprints also appear twice in \emph{checker}.
In particular, \mbox{$1 \times 3$} blueprints having clues in multiple columns of \emph{both} the middle \emph{and} the right stack actually appear four times in \emph{checker}'s table of blueprints, e.g., this one here of size 12:
\begin{figure}[h]
	\begin{center}
	\begin{tabular}{lcr}	
		\begin{minipage}{6.5cm}
			\begin{center}
				\includegraphics[scale=0.85]{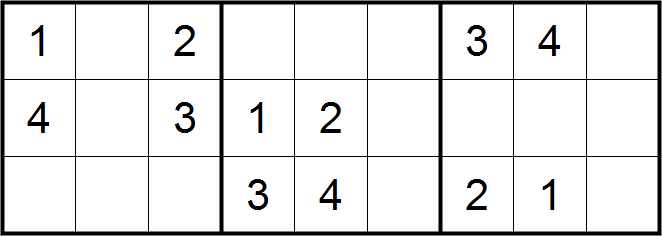}
			\end{center}
		\end{minipage}
	&
		\hspace{0.2cm}
	&
		\begin{minipage}{6.5cm}
			\begin{center}
				\includegraphics[scale=0.85]{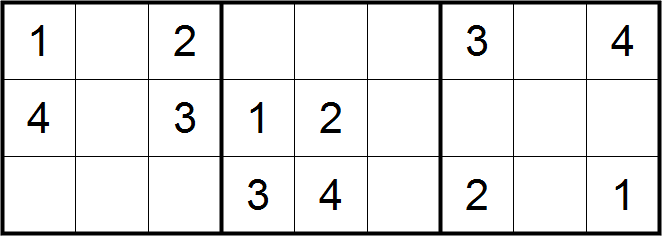}
			\end{center}
		\end{minipage}
	\\
		\vspace{0.2cm}
	\\
		\begin{minipage}{6.5cm}
			\begin{center}
				\includegraphics[scale=0.85]{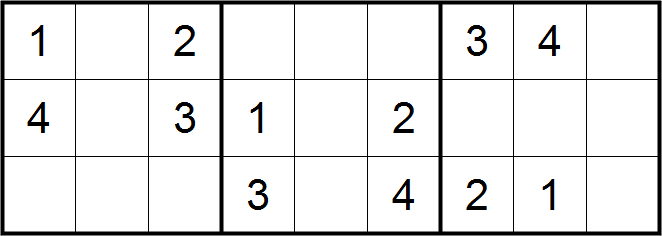}
			\end{center}
		\end{minipage}
	&
		\hspace{0.2cm}
	&
		\begin{minipage}{6.5cm}
			\begin{center}
				\includegraphics[scale=0.85]{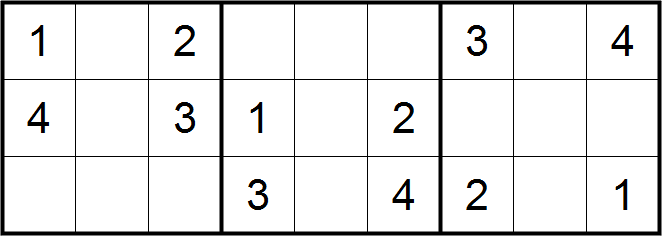}
			\end{center}
		\end{minipage}
	\end{tabular}
	\caption{A $1 \times 3$ blueprint of size~12 that occurs four times in \emph{checker}'s list of blueprints}
	\vspace{-15pt}
	\end{center}
\end{figure}

\noindent
In order to tackle \mbox{$2 \times 2$}, \mbox{$2 \times 3$} and \mbox{$3 \times 3$} blueprints, we need the following result, whose proof is quite long (although not very difficult).

\begin{prop}\label{prop:blueprint}
	Every blueprint is equivalent to one containing the same digit twice in the same band.
\end{prop}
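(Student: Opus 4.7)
The plan is to show that any minimal unavoidable set $U$ already contains some digit appearing twice in a band or twice in a stack; because transposition is an equivalence transformation, a stack-repetition can be converted into a band-repetition, and the proposition follows.

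By Corollary \ref{cor:derangement}, every box meeting $U$ does so in at least two cells, so there is certainly some box $B$ with $|U \cap B| \geq 2$. Fix a UA cell $(r,c) \in U \cap B$ and let $d := G(r,c)$. Let $H$ be an alternate completion of $G \setminus U$. Since $H$ differs from $G$ on every cell of $U$ (Lemma \ref{lemma:unav}), and $H$ is obtained from $G$ by a derangement on $U \cap B$ (Corollary \ref{cor:derangement}), digit $d$ must appear in $B$ in $H$ at some cell $(r^*, c^*) \in U \cap B$ with $(r^*, c^*) \neq (r,c)$.

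The argument then splits into two symmetric cases. If $r^* \neq r$, I would locate the cell $(r^*, c^{**})$ in row $r^*$ where $G$ places digit $d$. The sudoku constraint applied to $H$ forces $(r^*, c^{**}) \in U$: otherwise $H$ would agree with $G$ there, producing two occurrences of $d$ in row $r^*$ of $H$. The sudoku constraint applied to $G$ then forces $(r^*, c^{**}) \notin B$, because $G$ has exactly one $d$ per box. Since $r^*$ lies in the same band as $B$, this produces a second UA cell with digit $d$ in the band of $B$. If instead $r^* = r$, the entirely parallel argument using column $c^*$ in place of row $r^*$ yields a second UA cell with digit $d$ in the stack of $B$.

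In either case the blueprint $U$, possibly after transposition, already contains the same digit twice in the same band, which is what we wanted. I do not anticipate a serious obstacle: the only bookkeeping needed is to verify that the newly produced cell lies in $U$, is distinct from $(r,c)$, and avoids $B$, and all three statements follow immediately from the sudoku constraints applied to $G$ and $H$ together with the fact that $H$ is a completion distinct from $G$.
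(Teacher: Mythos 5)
Your proof is correct for the proposition as literally stated, but it takes a genuinely different route from the paper's, and it establishes slightly less than the paper's proof actually delivers. Both arguments first show that some digit of the blueprint repeats within a band or within a stack; you obtain this by a clean local argument (fix a box meeting $U$ in at least two cells, follow where the digit $d$ of one such cell lands in the alternate completion $H$, and use the row/column constraint on $H$ to force a second $U$-cell carrying $d$ in $G$ in the same band or the same stack), whereas the paper argues globally by contradiction, using the cell permutation $\sigma$ furnished by Corollary~\ref{cor:derangement} to show that a digit with no band- or stack-repetition would have to stay fixed. The real divergence is in how the stack case is disposed of: you simply transpose, which is legitimate since transposition is an equivalence transformation, so the literal statement follows at once. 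The paper, by contrast, only permits itself to transpose $2\times2$ and $3\times3$ blueprints, because transposing a $2\times3$ blueprint yields a $3\times2$ one and destroys the normalization $m\le n$ on which the blueprint tables and the search procedure of Section~\ref{subsec:blueprints} are organized; it therefore proves the stronger fact that a $2\times3$ minimal unavoidable set in which every digit repeats only within a stack cannot exist at all (the construction $H=\sigma(G^{(0)})\cup G^{(1)}\cup G^{(2)}$ contradicts minimality), so that a band repetition is available without changing the orientation. Your approach buys brevity and a more elementary argument; the paper's extra work buys compatibility of the band-repetition normalization with the fixed $m\le n$ shape classes, which is what the implementation relies on. To support that use, your proof would still need a separate argument for a $2\times3$ blueprint whose only repetitions lie within stacks.
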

\begin{proof}
First note that for \mbox{$1 \times n$} blueprints there is nothing to prove as the claim follows directly from the fact that any digit in a minimal unavoidable set appears at least twice, see Lemma~\ref{lemma:unav}.
So assume that we are given an \mbox{$m \times n$} blueprint $B$ such that \mbox{$n \geq m \geq 2$}.
We begin by showing that there is either a band or a stack containing the same digit twice.
Let $d$ be any digit occurring in $B$.
Suppose for the sake of contradiction that $d$ does not appear twice in either the same band or the same stack.
Without loss of generality, the only two possibilities for \emph{all} occurrences of $d$ are therefore as follows, recalling that $d$ appears at least twice:

\smallskip
\begin{center}
	\begin{tabular}{lcr}
		\begin{minipage}{5cm}
			\begin{center}
				\includegraphics[scale=0.6]{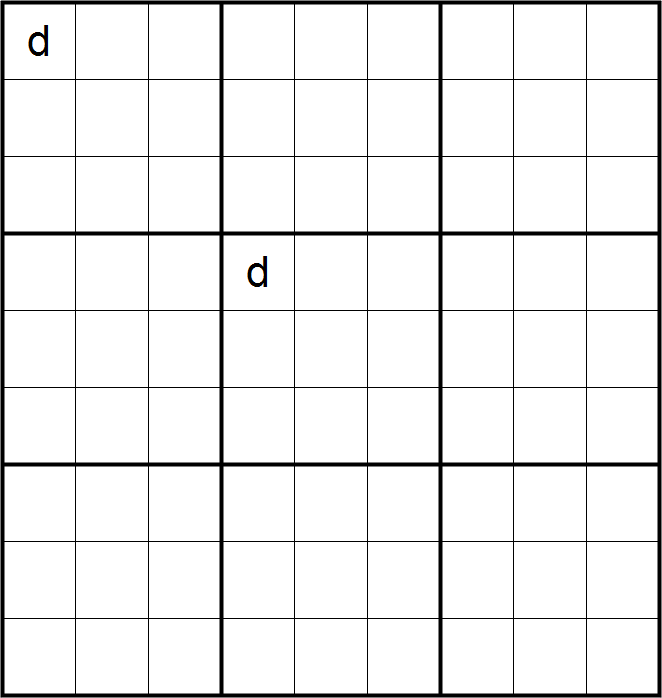}
			\end{center}
		\end{minipage}
	&
		\hspace{0.2cm}
	&
		\begin{minipage}{5cm}
			\begin{center}
				\includegraphics[scale=0.6]{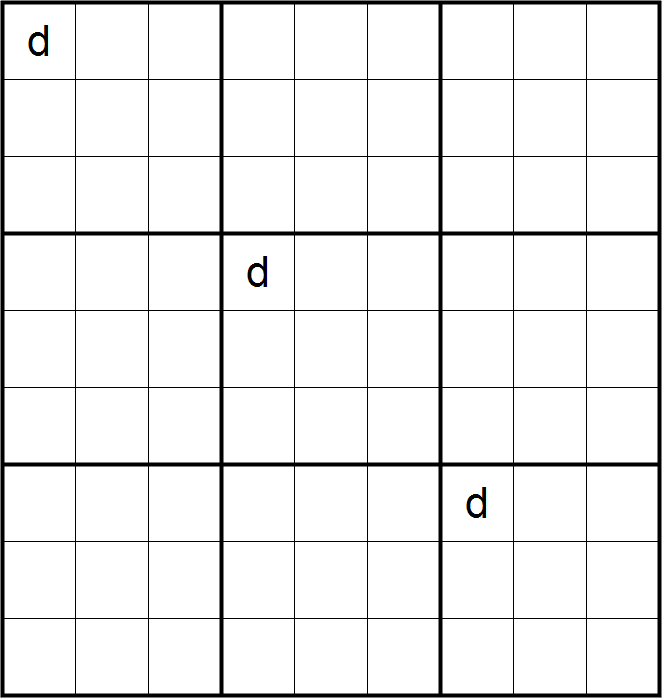}
			\end{center}
		\end{minipage}
	\end{tabular}
\end{center}
By Corollary~\ref{cor:derangement}, if we number the cells in a grid from 0 to 80 inclusive, left-to-right and top-to-bottom, then there exists a permutation \mbox{$\sigma\colon \{0,\ldots,80\} \to \{0,\ldots,80\}$} such that
$$
	\left\{\begin{array}{ll}
		\sigma(c) \neq c~\textrm{ but }~\mathrm{box}(\sigma(c)) = \mathrm{box}(c), & c \in B, \\
		\sigma(c) = c, & c \notin B,
	\end{array}\right.
$$
and whose application to the blueprint $B$ respects the rules of sudoku.
Here, the function $\mathrm{box}$ evaluates to the index of the \mbox{$3 \times 3$} box the given cell belongs to.
(Formally, \mbox{$\mathrm{box}\colon \{0,\ldots,80\} \to \{0,\ldots,8\}$} is defined by
\mbox{$c \mapsto \lfloor c / 27 \rfloor \times 3 + \lfloor c~\!\!\!\pmod{9}/ 3\rfloor$}.)
But then in both of the above cases, by the properties of $\sigma$ just stated, any instance of the digit $d$ would have to stay fixed as the cell containing it is only allowed to move inside its box, but if that cell actually left either its original row or column, the rules of sudoku would be violated, which is the desired contradiction.
In detail, if a cell of $B$ containing the digit $d$ left, e.g., its row under $\sigma$, then the row in question would lack $d$, as any other appearance of $d$ that might compensate for the ``loss'' of the original instance of $d$ in the affected row would have to be contained in the same band, because, after applying $\sigma$, all cells are still in the same box as they originally were.
Similarly if a cell of $B$ containing $d$ left its original column under $\sigma$.

This proves our first claim, that $d$ necessarily appears twice in either the same band or the same stack.
If there is a band containing the digit $d$ twice, we are done.
Otherwise, for a \mbox{$2 \times 2$} or a \mbox{$3 \times 3$} blueprint we simply take the transpose, and again the claim is proven.
We are left with the case that $B$ is a \mbox{$2 \times 3$} blueprint such that every digit appears exactly twice in the same stack --- by the pigeon hole principle, no digit of $B$ can appear three or more times, because then at least one of the bands that $B$ contains cells of would have that digit multiple times.

Without loss of generality, two digits of $B$ are as shown on the left; the respective digits of \mbox{$\sigma(B)$} are then necessarily as shown on the right:
\smallskip
\begin{center}
	\begin{tabular}{lcr}
		\begin{minipage}{5cm}
			\begin{center}
				\includegraphics[scale=0.85]{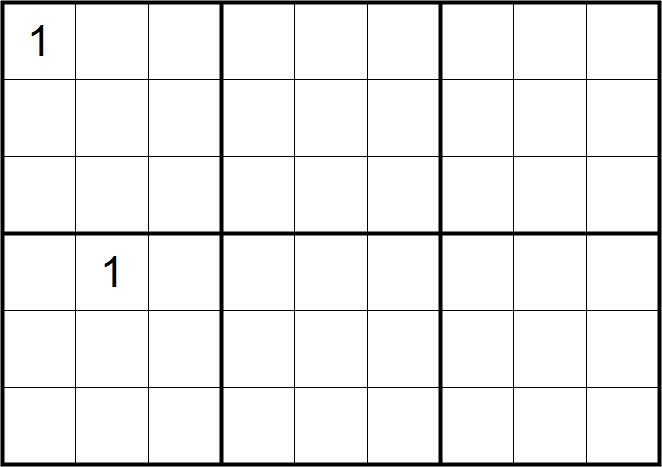}
			\end{center}
		\end{minipage}
	&
		\begin{tabular}{c}
			\small{$\sigma$} \\ $\mapsto$
		\end{tabular}
	&
		\begin{minipage}{5cm}
			\begin{center}
				\includegraphics[scale=0.85]{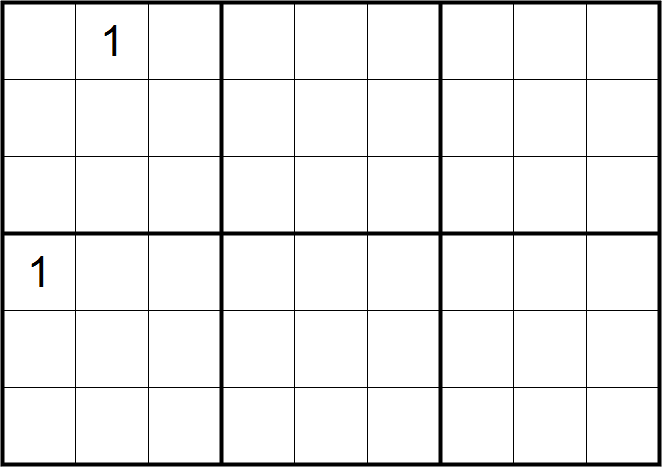}
			\end{center}
		\end{minipage}
	\end{tabular}
\end{center}

\smallskip\noindent
This illustrates a general principle --- when applying $\sigma$ to $B$, \emph{every} cell of $B$ stays in its original row, once again because no digit appears twice in the same band of $B$ and as all cells of $B$ stay in their original boxes under $\sigma$.
Hence if $d$ is any digit appearing in $B$ and if $c$ and $c'$ are the two cells of $B$ containing $d$, then
$$
	\begin{array}{rclrclrcl}
		\mathrm{box}(\sigma(c)) & = & \mathrm{box}(c),&
		\mathrm{row}(\sigma(c)) & = & \mathrm{row}(c),&
		\sigma(c) & \neq & c,\\
		\qquad\mathrm{box}(\sigma(c')) & = & \mathrm{box}(c'),\qquad&
		\mathrm{row}(\sigma(c')) & = & \mathrm{row}(c'),\qquad&
		\sigma(c') & \neq & c'.
	\end{array}
$$
It now follows immediately that
\begin{equation}\label{eq:col}
	\mathrm{col}(\sigma(c)) \;=\; \mathrm{col}(c') \qquad\textrm{and}\qquad \mathrm{col}(\sigma(c')) \;=\; \mathrm{col}(c),
\end{equation}
since the rules of sudoku would be violated otherwise.
(The functions $\mathrm{row}$ and $\mathrm{col}$ are defined in the obvious way.)
We summarize the above observations as follows.

\emph{Every digit appearing in $B$ occurs exactly twice, and both instances are in the same stack; moreover, every cell of $B$ stays in its original row and box under $\sigma$ but moves to the original column of the other cell of $B$ containing the same digit.}

We will complete the proof by showing that this contradicts minimality of $B$ --- recall that any blueprint by definition is an instance of a \emph{minimal} unavoidable set.
In brief, it suffices to notice that the subset of cells of $B$ contained in the left stack, which is a proper subset of $B$ because $B$ also has cells in the other two stacks of the grid, already is itself unavoidable.
We shall now make this precise.
To this end, let \mbox{$G$} be any completion of $B$, and let \mbox{$G^{(0)}$}, \mbox{$G^{(1)}$} and  \mbox{$G^{(2)}$} denote the three stacks of $G$.
That is to say, for each \mbox{$s=0,1,2$}, we think of \mbox{$G^{(s)}$} as a function
$$\mbox{$\{9i + j + 3s \mid i = 0, \ldots, 8, j = 0, 1, 2\}$} \;\to\; \{1,\ldots,9\},$$
and we again identify \mbox{$G^{(s)}$} with the corresponding subset of \mbox{$\{0,\ldots,80\} \times \{1,\ldots,9\}$}.
Recall that in the beginning of the proof we chose $\sigma$ such that \mbox{$\mathrm{box}(\sigma(c)) = \mathrm{box}(c)$}, for all \mbox{$c = 0, \ldots, 80$}.
So when $\sigma$ is applied to $G$, any cell of the grid stays in its original box, in particular, any cell stays in  its original stack.
Therefore, the set \mbox{$\sigma(G^{(s)}) = \{(\sigma(c),d) \mid (c,d) \in G^{(s)}\}$} is still the stack of index $s$ (of a different grid), for each \mbox{$s=0,1,2$}, and so there is no ambiguity to set
$$H \;=\; \sigma(G^{(0)}) \cup G^{(1)} \cup G^{(2)}.$$
Note that $H$ is a valid sudoku solution grid:
\begin{itemize}
\item By the above, no cell of $B$ moves to a different row under $\sigma$ (recall also that cells in \mbox{$G \backslash B$} are fixed by $\sigma$ anyway).
Therefore, if \mbox{$d_1,d_2,d_3$} are the three digits contained in any row of the stack \mbox{$G^{(0)}$}, then the corresponding row of the stack \mbox{$\sigma(G^{(0)})$} will also contain the same three digits \mbox{$d_1,d_2,d_3$}, though possibly in a different order.
In any case, each row of $H$ contains every number from 1 to 9 exactly once.
\item Although some cells of \mbox{$G^{(0)}$} do change columns when $\sigma$ is applied, every column of $H$ nonetheless contains only distinct entries --- as if \mbox{$c \in G^{(0)}$} is such that \mbox{$\mathrm{col}(\sigma(c)) \neq \mathrm{col}(c)$}, then \mbox{$c \in B$} and from (\ref{eq:col}) there exists \mbox{$c' \in G^{(0)}, c' \neq c,$} such that the cells $c$ and $c'$ carry the same digit and \mbox{$\mathrm{col}(\sigma(c')) = \mathrm{col}(c)$}, i.e., $c'$ acts as the replacement for $c$ in the original column of $c$ (and conversely).
\item Because $\sigma$ permutes the cells in each box, it is immediately clear that every box of \mbox{$\sigma(G^{(0)})$} still contains nine different numbers.
\end{itemize}
Observe now that the set \mbox{$U := G \backslash (G \cap H)$} is a proper subset of $B$ that is in fact unavoidable.
First, $U$ is a subset of $B$ as, trivially, \mbox{$G \supseteq G \backslash B$}, and also \mbox{$H \supseteq G \backslash B$} since $\sigma$ fixes cells not contained in $B$.
So \mbox{$G \cap H \supseteq G \backslash B$}, and taking complements we therefore get \mbox{$U \subseteq B$}.
Second, $U$ is a \emph{proper} subset of $B$ because \mbox{$G \cap H \supseteq G^{(1)} \cup G^{(2)}$} but \mbox{$B \cap (G^{(1)} \cup G^{(2)}) \neq \emptyset$}, recalling that $B$ contains cells of the middle and the right stack.
Finally, directly from the definition, $U$ is unavoidable as its complement has two different completions, namely $G$ and $H$.
This completes the proof of Proposition~\ref{prop:blueprint}.
\end{proof}

\noindent
So for \emph{any} blueprint it is no loss of generality to assume that two digits in the top band are as shown in Figure~\ref{fig:unav}, not just for \mbox{$1 \times n$} blueprints.
The actual algorithm used when searching for \mbox{$2\times 2$}, \mbox{$2\times 3$}, and \mbox{$3\times 3$} unavoidable sets is very similar to the one for \mbox{$1 \times n$} unavoidable sets --- the only difference is that we further arrange for each blueprint to be of one of the following three types, again to reduce the number of possibilities to check (note that the first and the third type overlap):

\smallskip
\begin{figure}[h!]
	\begin{center}
	\begin{minipage}{12.8cm}
	\begin{tabular}{rcccl}
		\begin{minipage}{3.5cm}
			\begin{center}
				\includegraphics[scale=0.85]{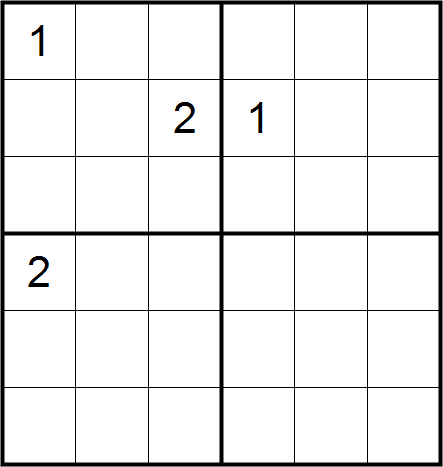}
			\end{center}
		\end{minipage}
	&
		\hspace{0.1cm}
	&
		\begin{minipage}{3.5cm}
			\begin{center}
				\includegraphics[scale=0.85]{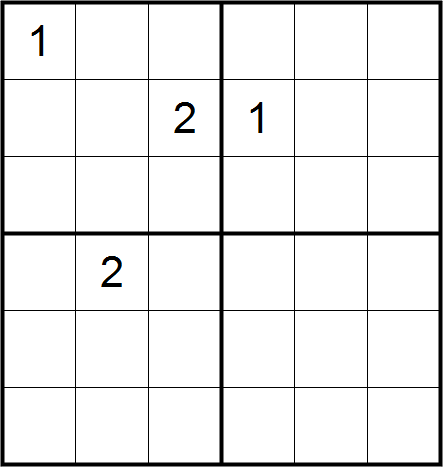}
			\end{center}
		\end{minipage}
	&
		\hspace{0.1cm}
	&
		\begin{minipage}{3.5cm}
			\begin{center}
				\includegraphics[scale=0.85]{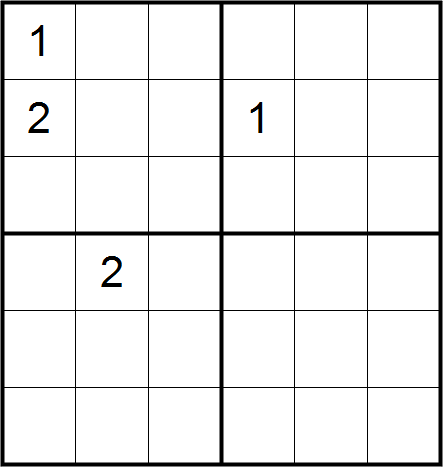}
			\end{center}
		\end{minipage}
	\end{tabular}
	\caption{The three types of \mbox{$m \times n$} blueprint with \mbox{$m \geq 2$} (only the top-left \mbox{$6 \times 6$} subgrid is shown)}
	\vspace{-15pt}
	\end{minipage}
	\end{center}
\end{figure}

\noindent
With the above ideas implemented, finding all unavoidable sets of size up to 12 in a sudoku solution grid takes less than 0.05~seconds on average.
	
\subsection{Higher-degree unavoidable sets}\label{subsec:higherdeg}

This section gives the details of the theory of higher-degree unavoidable sets introduced in Section~\ref{subsec:sketchC2}.
There are unavoidable sets that require more than one clue, which we call \emph{higher-degree} unavoidable sets.
Let us illustrate this with an example.

\begin{figure}[h]
\begin{center}
	\includegraphics[scale=0.85]{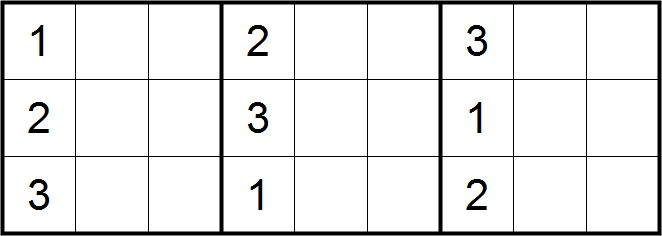}
	\caption{An unavoidable set of degree~2}\label{fig:unav92I}
	\vspace{-15pt}
\end{center}
\end{figure}

\noindent
Note that two clues are needed from the nine digits shown in order to completely determine these nine cells.
Because, if only one is given, the other two digits may be interchanged.
These nine cells form an unavoidable set requiring two clues.
Technically, the above nine clues are really the union of nine minimal unavoidable sets of size six each, and the intersection of these nine minimal unavoidable sets is empty, hence one clue is not enough to hit all of them.
Thus the above is an example of a degree~2 unavoidable set.
If we say that an unavoidable set as defined earlier (see the definition on p.~\pageref{def:UA}) is an unavoidable set of degree 1, then we may recursively define the notion of an unavoidable set of degree \mbox{$k > 1$}.

\begin{defn}
A nonempty subset $U$ of a sudoku solution grid $G$ is said to be an \emph{unavoidable set of degree \mbox{$k > 1$}} if for all \mbox{$c \in U$} the set \mbox{$U \backslash \{c\}$} is an unavoidable set of degree \mbox{$k - 1$}.
\end{defn}

There is the following alternative characterization of higher-degree unavoidable sets.

\begin{lemma}\label{lemma:altchar}
Let $G$ be a sudoku solution grid and let \mbox{$k \in \mathbb{N}$}.
A nonempty subset \mbox{$U \subseteq G$} of size $n$ is unavoidable of degree $k$ if and only if for all $\binom{n}{k-1}$ choices of distinct elements \mbox{$c_1, \ldots, c_{k-1} \in U$}, the set \mbox{$U \backslash \{c_1,\ldots,c_{k-1}\}$} is unavoidable.
\end{lemma}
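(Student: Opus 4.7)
The statement is purely a consequence of unrolling the recursive definition of ``unavoidable of degree $k$,'' so I would prove it by a straightforward induction on $k$, with no new sudoku-specific content required beyond what is already captured in the two notions of unavoidability.

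\textbf{Base case ($k=1$).} Here $\binom{n}{k-1} = \binom{n}{0} = 1$, and the only choice of $k-1 = 0$ distinct elements is the empty tuple, in which case $U \setminus \{c_1,\ldots,c_{k-1}\} = U$. So the right-hand side of the claim reduces to ``$U$ is unavoidable,'' which is precisely the meaning of ``unavoidable of degree $1$'' as remarked immediately before the definition.

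\textbf{Inductive step.} Assume the lemma for $k-1$, and let $U \subseteq G$ be nonempty of size $n$. For the forward direction, suppose $U$ is unavoidable of degree $k$. By definition, for every $c \in U$ the set $U \setminus \{c\}$ is unavoidable of degree $k-1$, so by the inductive hypothesis, for every choice of $k-2$ distinct elements $c_2,\ldots,c_{k-1} \in U \setminus \{c\}$, the set $(U \setminus \{c\}) \setminus \{c_2,\ldots,c_{k-1}\}$ is unavoidable. Writing $c_1 := c$ and letting $c_1$ range over $U$, this is exactly the statement that $U \setminus \{c_1,\ldots,c_{k-1}\}$ is unavoidable for every choice of $k-1$ distinct elements of $U$.

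For the converse, suppose that $U \setminus \{c_1,\ldots,c_{k-1}\}$ is unavoidable for every $(k-1)$-element subset $\{c_1,\ldots,c_{k-1}\} \subseteq U$. Fix any $c \in U$ and set $V := U \setminus \{c\}$. For any $(k-2)$-element subset $\{c_2,\ldots,c_{k-1}\} \subseteq V$, the set $V \setminus \{c_2,\ldots,c_{k-1}\} = U \setminus \{c,c_2,\ldots,c_{k-1}\}$ is unavoidable by hypothesis (taking $c_1 = c$). Hence, by the inductive hypothesis applied to $V$, the set $V = U \setminus \{c\}$ is unavoidable of degree $k-1$. Since $c \in U$ was arbitrary, $U$ is unavoidable of degree $k$ by definition.

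\textbf{Expected obstacle.} There is no real obstacle here; the only thing to be careful about is bookkeeping of the indices when moving between the ``remove one, then $k-2$ more'' description implicit in the recursive definition and the ``remove $k-1$ at once'' description in the lemma, and verifying that the base case $k=1$ is handled correctly by the convention that the only $0$-element choice is the empty tuple (so the lemma contains no vacuous content at $k=1$).
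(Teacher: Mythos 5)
Your proof is correct and follows essentially the same route as the paper: induction on $k$, with the base case $k=1$ reducing to the original definition of unavoidability, the forward direction unrolling the recursive definition via the inductive hypothesis, and the converse applying the inductive hypothesis to $U \setminus \{c\}$ for an arbitrary $c \in U$. The only difference is cosmetic bookkeeping of indices, so nothing further is needed.
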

\begin{proof}
Induction on $k$.
For \mbox{$k = 1$} the result just says that $U$ is unavoidable of degree~1 if and only if $U$ is unavoidable as defined earlier, so there is nothing to prove.
Suppose now that \mbox{$k > 1$} and the result is true for smaller values of $k$.
Assume that $U$ satisfies the hypothesis of the ``if''-direction and let \mbox{$c \in U$} be an arbitrary element; we need to show that \mbox{$U \backslash \{c\}$} is unavoidable of degree \mbox{$k - 1$}.
Using induction it suffices if \mbox{$\mbox{$(U \backslash \{c\}) \backslash \{c_1, \ldots, c_{k-2}\} = U \backslash \{c,c_1, \ldots, c_{k-2}\}$}$} is unavoidable for all possible combinations of distinct \mbox{$c_1, \ldots, c_{k-2} \in U \backslash \{c\}$}, which however is true by assumption.

For the ``only~if''-direction suppose that $U$ is unavoidable of degree $k$ and let \mbox{$c_1, \ldots, c_{k-1} \in U$} be distinct.
By definition, \mbox{$U \backslash \{c_1\}$} is then unavoidable of degree \mbox{$k-1$}, and from induction it thus follows immediately that \mbox{$U \backslash \{c_1, \ldots, c_{k-1}\} \;=\; (U \backslash \{c_1\}) \backslash \{c_2, \ldots, c_{k-1}\}$} is unavoidable, as required.
\end{proof}

As before, we say that an unavoidable set of degree greater than 1 is \emph{minimal} if no proper subset is unavoidable of the same degree.
Furthermore, to ease notation, we will say that $U$ is an \mbox{$(m,k)$} unavoidable set if $U$ is an unavoidable set of degree $k$ having $m$ elements.
So the example in Figure~\ref{fig:unav92I} is a \mbox{$(9,2)$} unavoidable set that is the union of nine \mbox{$(6,1)$} unavoidable sets.
One can easily construct higher-degree unavoidable sets, e.g., the union of any two disjoint degree~1 unavoidable sets is trivially an unavoidable set of degree 2.
More generally, we have the following result.
\begin{prop}
Let \mbox{$U \subset G$} be an \mbox{$(m,k)$} unavoidable set.
If \mbox{$V \subset G$} is an \mbox{$(n,\ell)$} unavoidable set such that \mbox{$U \cap V = \emptyset$}, then \mbox{$U \cup V$} is an \mbox{$(m+n,k+\ell)$} unavoidable set.
\end{prop}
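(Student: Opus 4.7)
The plan is to use the alternative characterization of higher-degree unavoidable sets given by Lemma~\ref{lemma:altchar}. The size claim $|U\cup V|=m+n$ is immediate from disjointness, so the real content is to show that $U\cup V$ is unavoidable of degree $k+\ell$.

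First I would establish two elementary monotonicity facts. \textbf{(i)} If $X\subseteq Y\subseteq G$ and $X$ is an ordinary (degree-$1$) unavoidable set, then $Y$ is also unavoidable: two distinct completions of $G\setminus X$ both agree with $G$ outside $X$, hence a fortiori outside $Y$, so both complete the smaller clue set $G\setminus Y\subseteq G\setminus X$. \textbf{(ii)} If $W$ is unavoidable of degree $k$ and $c_1,\ldots,c_j\in W$ are distinct with $0\le j\le k-1$, then $W\setminus\{c_1,\ldots,c_j\}$ is unavoidable of degree $k-j$; this is an immediate induction on $j$ straight from the recursive definition. In particular, taking $j=k-1$ yields an ordinary unavoidable set.

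With these tools in hand I would verify the hypothesis of Lemma~\ref{lemma:altchar} for $U\cup V$. Pick any $k+\ell-1$ distinct elements $c_1,\ldots,c_{k+\ell-1}\in U\cup V$, and let $a$ respectively $b$ denote how many lie in $U$ respectively $V$, so that $a+b=k+\ell-1$. By pigeonhole, at least one of $a\le k-1$ or $b\le \ell-1$ holds; without loss of generality, $a\le k-1$. Then by (ii), the residual set $U':=U\setminus\{c_i:c_i\in U\}$ is unavoidable (of degree $k-a\ge 1$). Since the full residual set $W:=(U\cup V)\setminus\{c_1,\ldots,c_{k+\ell-1}\}$ contains $U'$, fact (i) forces $W$ to be unavoidable as well. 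Because this argument works for every choice of the $c_i$'s, Lemma~\ref{lemma:altchar} concludes that $U\cup V$ is unavoidable of degree $k+\ell$.

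The proof is essentially bookkeeping, and the only real obstacle is the pigeonhole step together with checking that $U'$ is nonempty --- which holds because any degree-$k$ unavoidable set has at least $k$ elements (by induction on the recursive definition), so $|U|\ge k>a$. Conceptually, the key observation is that disjointness of $U$ and $V$ prevents the $k+\ell-1$ excluded elements from simultaneously exhausting the degree-capacity of both sets, so at least one of them retains enough of its structure to force the union to remain unavoidable.
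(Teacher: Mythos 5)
Your proof is correct and follows essentially the same route as the paper, whose one-line proof simply invokes the alternative characterization (Lemma~\ref{lemma:altchar}); your pigeonhole count together with the monotonicity fact (a superset of an unavoidable set is unavoidable) is exactly the argument the paper leaves implicit. The only hair-thin gap is that you apply fact (i) to $U'$, which a priori has degree $k-a\ge 1$ rather than degree exactly $1$, but this is repaired instantly by removing $k-a-1$ further elements via fact (ii) before invoking (i).
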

\begin{proof}
The claim follows directly from the last lemma.
\end{proof}
Repeated application of this proposition gives the following useful fact.
\begin{cor}\label{cor:higherunav}
Suppose that \mbox{$U_1, \ldots, U_t$} are unavoidable sets of a sudoku solution grid $G$ of degree \mbox{$k_1, \ldots, k_t$}, respectively.
Assume further that the $U$'s are pairwise disjoint.
Then \mbox{$U_1 \cup \cdots \cup U_t$} is an unavoidable set of degree \mbox{$k_1 + \cdots + k_t$}.
\end{cor}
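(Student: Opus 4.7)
The plan is to prove the corollary by induction on $t$, since the preceding proposition gives exactly the two-set version of the statement and the corollary merely asserts the $t$-fold generalization.

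For the base case $t = 1$ there is nothing to show: $U_1$ is by hypothesis an unavoidable set of degree $k_1$. For the inductive step, assume the statement for $t - 1$ pairwise disjoint higher-degree unavoidable sets. Given $U_1, \ldots, U_t$ pairwise disjoint with respective degrees $k_1, \ldots, k_t$, the inductive hypothesis applied to $U_1, \ldots, U_{t-1}$ tells us that $V := U_1 \cup \cdots \cup U_{t-1}$ is an unavoidable set of degree $k_1 + \cdots + k_{t-1}$.

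The key observation, which I would then state explicitly, is that pairwise disjointness of $U_1, \ldots, U_t$ forces $V \cap U_t = \emptyset$, because $U_t$ is disjoint from each $U_i$ for $i < t$. Thus the hypotheses of the preceding proposition are satisfied by the pair $(V, U_t)$, and applying that proposition yields that $V \cup U_t = U_1 \cup \cdots \cup U_t$ is an unavoidable set of degree $(k_1 + \cdots + k_{t-1}) + k_t = k_1 + \cdots + k_t$, completing the induction.

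There is essentially no obstacle here — the only thing to take a little care with is the disjointness bookkeeping at each inductive step, to make sure we really can invoke the preceding proposition (which requires the two sets in question to be disjoint, not just the original $U_i$'s). Since this follows immediately from pairwise disjointness, the proof is a short one-line induction in the write-up.
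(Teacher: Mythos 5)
Your proof is correct and matches the paper's approach: the paper simply states that the corollary follows by ``repeated application'' of the preceding proposition, which is exactly the induction on $t$ you spell out, including the easy disjointness check $V \cap U_t = \emptyset$.
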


\begin{defn}
An unavoidable set $U$ of degree \mbox{$k > 1$} is said to be \emph{nontrivial} if there does not exist an unavoidable set $U_1$ of degree $k_1$ and an unavoidable set $U_2$ of degree $k_2$ and disjoint from $U_1$, \mbox{$k_1, k_2 \in \mathbb{N}$}, such that \mbox{$U = U_1 \cup U_2$} and \mbox{$k = k_1 + k_2$}; otherwise, we say that $U$ is \emph{trivial}.
\end{defn}
So the \mbox{$(9,2)$} unavoidable set shown in Figure~\ref{fig:unav92I} is nontrivial.
In actual fact, it is one of only two types of nontrivial \mbox{$(9,2)$} unavoidable sets, the other being this one here:

\begin{figure}[h]
\begin{center}
	\includegraphics[scale=0.85]{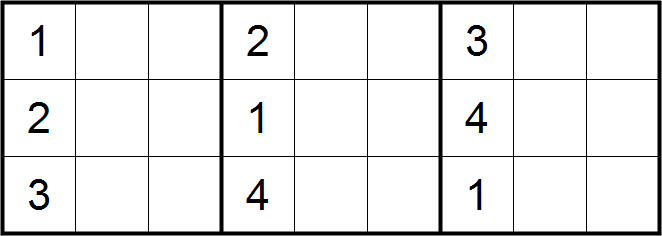}
	\caption{A nontrivial (9,2) unavoidable set}\label{fig:unav92II}
	\vspace{-15pt}
\end{center}
\end{figure}

\noindent
Earlier we said that, so far, nobody has found a purely mathematical proof that 8~clues are not sufficient for a sudoku puzzle to have a unique solution.
However, if one assumes that the sudoku solution grid in question has a nontrivial (9,2) unavoidable set (of either type), then it is actually easy to see why such a grid cannot contain a proper 8-clue puzzle.

We classified all minimal \mbox{$(m,2)$} unavoidable sets for \mbox{$m \leq 11$}.
The result was that no \mbox{$(m,2)$} unavoidable sets exist for \mbox{$m \leq 7$}.
While \mbox{$(8,2)$} unavoidable sets do exist, all these are trivial, i.e., any \mbox{$(8,2)$} unavoidable set is the union of two disjoint \mbox{$(4,1)$} unavoidable sets.
Similarly, minimal \mbox{$(10,2)$} unavoidable sets exist, but again, all these are trivial, i.e., the disjoint union of a \mbox{$(4,1)$} and a \mbox{$(6,1)$} unavoidable set.
There are seven distinct types of nontrivial minimal \mbox{$(11,2)$} unavoidable sets, which however we did not use in this project --- observe that any minimal \mbox{$(11,2)$} unavoidable set is necessarily nontrivial as there are no minimal \mbox{$(m,1)$} unavoidable sets for \mbox{$m=1,2,3,5,7$}.
Naturally we have the following result.

\begin{prop}\label{prop:higherdegunav}
	Let \mbox{$U \subset G$} be an \mbox{$(m,k)$} unavoidable set.
	Then we need to add at least $k$ elements from $U$ to \mbox{$G \backslash U$} in order to obtain a sudoku puzzle with a unique completion.
\end{prop}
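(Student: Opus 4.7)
My plan is to prove the contrapositive: if a subset $S \subseteq U$ has $|S| \le k-1$ elements, then the puzzle $P := (G \backslash U) \cup S$ fails to have a unique completion. The argument is almost immediate from the alternative characterization of higher-degree unavoidable sets (Lemma~\ref{lemma:altchar}), together with the trivial monotonicity observation that if $P \subseteq P'$ then every completion of $P'$ is also a completion of $P$.

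The first step is a bookkeeping remark: any $(m,k)$ unavoidable set satisfies $m \geq k$. This follows by an easy induction from the recursive definition, since removing a single cell from an $(m,k)$ set yields an $(m-1,k-1)$ set, whose size must in turn be at least $k-1$ by induction. Consequently, given $S \subseteq U$ with $|S| \leq k-1$, we can always choose a superset $S' \subseteq U$ with $|S'| = k-1$ exactly.

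Next, I invoke Lemma~\ref{lemma:altchar}: since $U$ is unavoidable of degree $k$, the set $U \backslash S'$ (obtained by deleting $k-1$ distinct cells from $U$) is an unavoidable set of degree~$1$, i.e., unavoidable in the sense of Definition~\ref{def:UA}. By definition, this means that the complement $G \backslash (U \backslash S') = (G \backslash U) \cup S'$ admits at least two distinct completions in the sudoku sense. Finally, since $P = (G \backslash U) \cup S \subseteq (G \backslash U) \cup S'$, any completion of the larger partial grid is also a completion of $P$, so $P$ itself has at least two completions and therefore is not a valid sudoku puzzle.

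There is no real obstacle here; the work has already been done by Lemma~\ref{lemma:altchar}. The only thing one must be careful about is the trivial monotonicity of the ``set of completions'' with respect to puzzle inclusion, which is what allows us to pass from the constructed superset $S'$ back to the given $S$. An alternative but equally short route would be induction on $k$, using the recursive definition directly: the base case $k=1$ is the definition of an unavoidable set, and the inductive step either extracts an element $c \in S$ and applies the hypothesis to the $(m-1,k-1)$ unavoidable set $U \backslash \{c\}$, or handles the case $S = \emptyset$ by observing (again via Lemma~\ref{lemma:altchar}) that $U$ itself is unavoidable of degree~$1$.
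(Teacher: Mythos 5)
Your proposal is correct and follows essentially the same route as the paper: the paper's proof likewise argues that adding at most $k-1$ clues from $U$ leaves multiple completions because, by Lemma~\ref{lemma:altchar}, $U \backslash \{c_1,\ldots,c_{k-1}\}$ is unavoidable for any choice of $k-1$ elements of $U$. You merely spell out two steps the paper leaves implicit --- extending $S$ to a set of exactly $k-1$ elements (using $m \ge k$) and the monotonicity of the set of completions under adding clues --- which is fine.
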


\begin{proof}
	If we add only up to \mbox{$k-1$} clues from $U$ to the puzzle \mbox{$G\backslash U$}, then there will still be multiple completions, since by Lemma~\ref{lemma:altchar} and the assumption that $U$ is an unavoidable set of degree $k$, the set \mbox{$U \backslash \{c_1, \ldots, c_{k-1}\}$} is unavoidable for any choice of \mbox{$c_1, \ldots, c_{k-1} \in U$}.
\end{proof}

This last proposition is really the key result on the theory of unavoidable sets as far as the problem of making the enumeration of hitting sets more efficient is concerned, see Section~\ref{sec:newalgo}.

\subsubsection*{Example}

We close this section with the following example of a sudoku grid that requires 18 clues, at least.
We can prove this fact purely mathematically using unavoidable sets of degree~2.

\begin{figure}[h]
\begin{center}
	\includegraphics[scale=0.85]{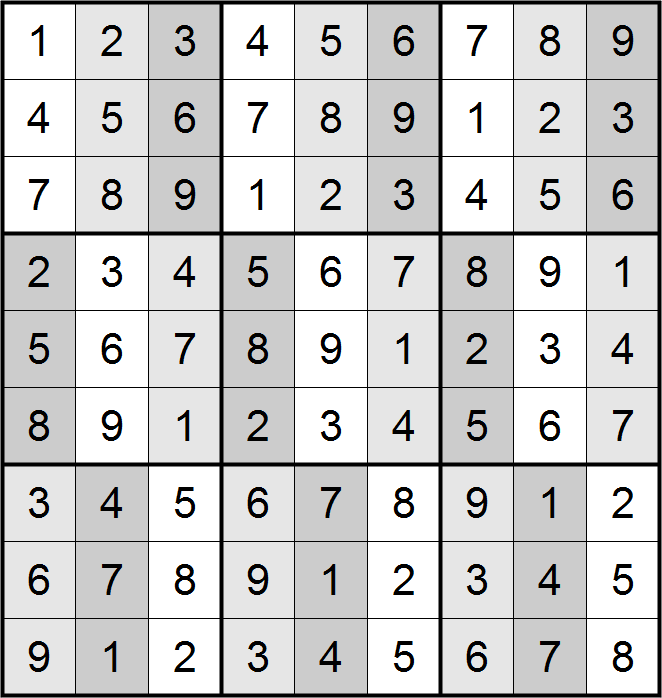}
	\caption{A grid requiring at least 18 clues}\label{fig:canonical}
	\vspace{-15pt}
\end{center}
\end{figure}

\begin{lemma}
	Any sudoku puzzle whose solution is this grid has at least $18$ clues.
\end{lemma}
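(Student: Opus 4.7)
The plan is to invoke the machinery of higher-degree unavoidable sets from Section~\ref{subsec:higherdeg}. Specifically, if I can exhibit pairwise disjoint unavoidable sets $U_1, \ldots, U_t$ of the given grid, of degrees $k_1, \ldots, k_t$, with $k_1 + \cdots + k_t \geq 18$, then Corollary~\ref{cor:higherunav} yields that $U_1 \cup \cdots \cup U_t$ is an unavoidable set of degree at least $18$, and Proposition~\ref{prop:higherdegunav} then forces any puzzle whose unique completion is this grid to contain at least $18$ clues.

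The concrete target is to partition the $81$ cells of the grid into nine pairwise disjoint nontrivial $(9,2)$ unavoidable sets, each of one of the two types shown in Figures~\ref{fig:unav92I} and~\ref{fig:unav92II}. Since the grid in Figure~\ref{fig:canonical} is the unique grid (up to equivalence) whose automorphism group attains the maximal order $648$, one expects such a partition to arise as a single orbit of an appropriate subgroup of automorphisms. I would locate one such $(9,2)$ set by fixing a triple of digits whose nine occurrences sit in a diagonal triple of $3 \times 3$ boxes and form the pattern of Figure~\ref{fig:unav92I}, and then obtain the remaining eight by the band-and-stack cyclic symmetries built into the grid. By construction these nine sets are pairwise disjoint and their union is all $81$ cells.

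Granting this, the conclusion is immediate: nine pairwise disjoint degree-$2$ unavoidable sets give, via Corollary~\ref{cor:higherunav}, a single unavoidable set of degree $9 \cdot 2 = 18$ covering the whole grid, and Proposition~\ref{prop:higherdegunav} completes the lemma. Note that no degree-$1$ auxiliary sets are needed, and the argument is purely combinatorial — no enumeration of puzzles is required.

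The main obstacle is verifying that each of the nine candidates really is a $(9,2)$ unavoidable set and not merely a disjoint union of smaller objects. By Lemma~\ref{lemma:altchar}, this amounts to checking that, for every cell $c$ in such a candidate $U_i$, the remaining eight cells $U_i \setminus \{c\}$ still form a (degree-$1$) unavoidable set, i.e., admit a nontrivial rearrangement that respects the three sudoku constraints. This is a finite but delicate check: one must exhibit an explicit derangement of those eight cells inside each affected row, column and box, along the lines of Corollary~\ref{cor:derangement}. The automorphism group of order $648$ is decisive here, because it suffices to perform this verification for a single representative $U_i$; the automorphisms permuting the $U_i$'s propagate the degree-$2$ property to all nine.
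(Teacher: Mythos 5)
Your proposal is correct and follows essentially the same route as the paper: the paper also partitions the grid of Figure~\ref{fig:canonical} into nine pairwise disjoint $(9,2)$ unavoidable sets (indicated by shadings in the figure), applies Corollary~\ref{cor:higherunav} to obtain an $(81,18)$ unavoidable set, and concludes via Proposition~\ref{prop:higherdegunav}. The only difference is that the paper simply verifies the decomposition ``upon inspection,'' whereas you sketch how the symmetry of the grid would reduce the verification to a single representative --- a reasonable elaboration, not a different argument.
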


\begin{proof}
	Upon inspection, the grid is the union of nine pairwise disjoint \mbox{$(9,2)$} unavoidable sets, as indicated by the different shadings.
	From Corollary~\ref{cor:higherunav}, this grid is therefore an \mbox{$(81,18)$} unavoidable set and thus requires at least 18 clues for a unique solution, by the last proposition.
\end{proof}

We note that some choices of 18 clues lead to a unique solution, but some do not.
We also remark that results like this lemma are impossible to prove by hand for a general grid.
It is possible to prove the lemma by hand only because this grid is highly structured.

\section{Hitting set algorithm from the original checker}\label{sec:orgalgo}

In this section we will explain how the original (2006) version of \emph{checker} enumerated the hitting sets for a given family of unavoidable sets in a sudoku solution grid.
Section~\ref{very_basic} describes how we imple\-mented the obvious hitting set algorithm, and Section \ref{subsec:deadcluevec} explains the one improvement we made.
Formally, the hitting set problem is as follows: given a triple \mbox{$(U,\mathcal{F},k)$}, where $U$ is a finite set (the \emph{universe}), \mbox{$\mathcal{F} \subset 2^U$} is a family of subsets of $U$, and $k$ is a positive integer, the task is to find all subsets \mbox{$H \subset U$}, \mbox{$\#H = k$}, such that \mbox{$H \cap S \neq \emptyset$}, for all \mbox{$S \in \mathcal{F}$}.
Each set $H$ is called a \emph{hitting set} for $\mathcal{F}$.

The hitting set problem is NP-complete \cite{karp}.
And indeed, the algorithm presented in this section, despite already having the advantage over a standard backtracking algorithm that each hitting set is enumerated once only, would not have been fast enough to solve the sudoku minimum number of clues problem using a reasonable amount of compute time until about the year 2020.
Only with the improvements described in Section~\ref{sec:newalgo} was this project feasible in 2011.
We should also mention that a common way to approach the hitting set problem is actually a greedy algorithm.
However, as the result a greedy algorithm produces is only approximate there would have been no sense applying such an algorithm to the sudoku minimum number of clues problem.

\subsection{Description of the basic algorithm}\label{very_basic}

For our original version of \emph{checker} from 2006 we essentially used the obvious algorithm for finding hitting sets, with one small, but powerful, improvement.
The strategy of the basic algorithm may be described in one paragraph.
Given a completed sudoku grid as well as a collection of unavoidable sets for that grid, candidate 16-clue puzzles are constructed by recursively adding clues from unavoidable sets.
Whenever the next clue is added to the puzzle being built up to, one first finds an unavoidable set that does not contain any of the clues picked so far and then branches in all possible ways.
(So if that unavoidable set has $d$ clues, there will be $d$ branches.)
Repeat until 16~clues have been picked.
If all unavoidable sets in the given collection have been hit before the $16^{th}$ clue is reached, the remaining clues needed are added in all possible ways.
It is also understood that at each stage, in order to minimize the number of branches, one always chooses a smallest unavoidable set not yet hit.

The data structure we used to accomplish the above was as follows.
Suppose that there are $m$ members in our family $\mathcal{F}$ of unavoidable sets.
Say \mbox{$\mathcal{F} = \{U_0, \ldots, U_{m-1}\}$} where each \mbox{$U_i \subset \{0, \ldots, 80\}$} and \mbox{$\# U_i \leq \# U_{i+1}$}.
Then for every clue \mbox{$c \in \{0,\ldots,80\}$} there is a binary vector of length $m$, called the \emph{hitting vector for $c$} and which we will denote $\mathrm{hitvec[c]}$, whose $i^{th}$ slot is given by \mbox{$\mathcal{X}_{U_i}(c)$}, where for any subset \mbox{$S \subset \{0,\ldots,80\}$} the function \mbox{$\mathcal{X}_S\colon \{0,\ldots,80\} \to \{0,1\}$} is defined by
$$\mathcal{X}_S(s) = \left\{\begin{array}{cl}
															1, & \mbox{$s \in S,$} \\
                              0, & \mbox{$s \notin S.$}
                            \end{array}\right.$$
So $\mathcal{X}_S$ is just the usual \emph{characteristic function} (or \emph{indicator function}) for $S$ with respect to \mbox{$\{0,\ldots,80\}$}, and the $i^{th}$ slot of \mbox{$\mathrm{hitvec[c]}$} records whether or not the clue $c$ is contained in the unavoidable set \mbox{$U_i$}.
Here is pseudocode for the procedure that sets up the hitting vectors:\label{page:InitHittingVectors}
\begin{verbatim}
 InitHittingVectors(U,m)    // U = array of unavoidable sets, m = number of sets
   create array hitvec[0..80]
   for c from 0 to 80
     do hitvec[c] := (X_{U[0]}(c),...,X_{U[m-1]}(c))
\end{verbatim}
We store the hitting set being constructed in the array $\mathrm{hitset}$.
When enumerating the hitting sets, we need to keep track of which unavoidable sets have been hit already.
Therefore, there is another array of binary vectors of length $m$, \mbox{$\mathrm{statevec[0..16]}$}.
Initially we set \mbox{$\mathrm{statevec[0]} := (0,\ldots,0)$}, i.e., \mbox{$\mathrm{statevec[0]}$} is the zero vector.
If we add the clue $c$ at the $j^{th}$ step, \mbox{$0 \leq j \leq 15$}, then we simply set \label{page:UpdateHittingVectors}
\begin{verbatim}
 hitset[j+1]   := c
 statevec[j+1] := statevec[j] OR hitvec[c]
\end{verbatim}
That is, we first save the clue $c$ to the array $\mathrm{hitset}$ and then perform componentwise (bitwise) boolean $\mathrm{OR}$ on the binary vectors $\mathrm{statevec[j]}$ and $\mathrm{hitvec[c]}$ and store the result in $\mathrm{statevec[j+1]}$.
Therefore $\mathrm{statevec[j+1]}$ contains a 1 in the $i^{th}$ slot if and only if either $\mathrm{statevec[j]}$ or $\mathrm{hitvec[c]}$ has a 1 in the $i^{th}$ slot, which is so if and only if either the set $U_i$ was already hit, or if \mbox{$c \in U_i$}.
We recursively do this until \mbox{$j = 16$} or \mbox{$\mathrm{statevec[j]}=(1,\ldots,1)$}.
In the latter case, i.e., if at some stage all unavoidable sets in our collection have been hit, we add \mbox{$16 - j$} more clues to the hitting set in all possible ways.

\subsection{Using the dead clue vector to prevent multiple enumeration of hitting sets}\label{subsec:deadcluevec}

We describe here the one improvement (over the obvious hitting set algorithm) that we used in the original release of \emph{checker}.
It addresses one clear shortcoming of the above, naive, algorithm, namely the fact that this algorithm will enumerate most candidate 16-clue puzzles multiple times.
For example, suppose that the first three sets in our family $\mathcal{F}$ of unavoidable sets to be hit are
\begin{eqnarray*}
	U_1 & = & \{0,3,9,12\}, \\ 
	U_2 & = & \{0,1,27,28\}, \\
	U_3 & = & \{3,4,66,67\}.
\end{eqnarray*}
When we choose \mbox{$0 \in U_1$} as the first clue of the hitting set under construction, then $U_2$ is also hit automatically, so we will use $U_3$ as the set for drawing the second clue from.
The first element in $U_3$ is 3, so one possibility for the first two clues is \mbox{$\{0,3\}$}.
On the other hand, when the algorithm later chooses \mbox{$3 \in U_1$} as the first clue of the hitting set, then $U_2$ is still unhit, so will be used for drawing the second clue from.
However \mbox{$0 \in U_2$}, so again one possibility for the first two clues is \mbox{$\{0,3\}$}.
The following illustration shows the search tree for the example just given:

\begin{figure}[h]
	\begin{center}
	\includegraphics[scale=0.15]{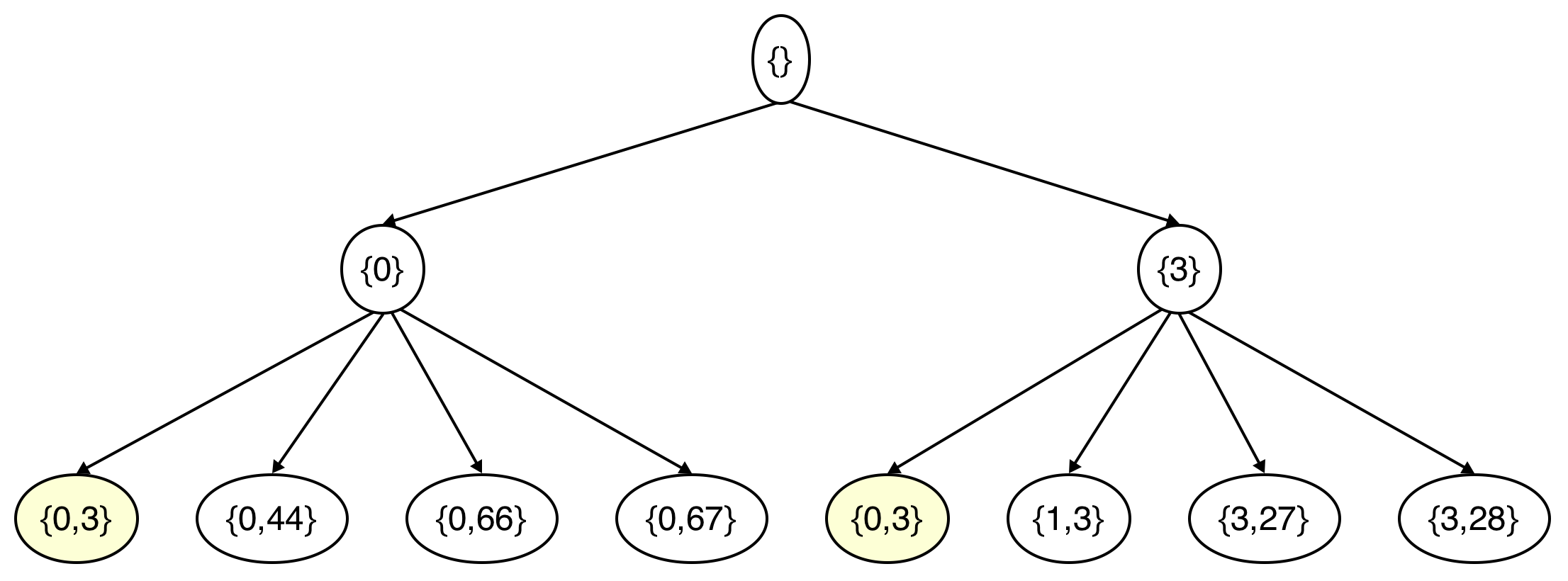}
	\caption{Sample search tree produced by a standard backtracking algorithm}\label{fig:tree}
	\vspace{-15pt}
	\end{center}
\end{figure}

\noindent
To avoid duplicating work as in the example just given, we incorporated the \emph{dead clue vector} into the original \emph{checker} to ensure that every 16-clue puzzle is enumerated once only.
The basic idea is that, whenever we add a clue to the hitting set from an unavoidable set, we consider all smaller clues from that unavoidable set as \emph{dead}, i.e., we exclude these smaller clues from the search (in the respective branch of the search tree only).
We again use a binary vector, of length 81, to keep track of which clues are dead, and whenever we add a clue, we check immediately beforehand whether or not that clue was already excluded earlier.
Going back to the example just given, when we choose \mbox{$3 \in U_1$} as the first element of the hitting set, then we also mark the element \mbox{$0 \in U_1$} as dead.
The next unavoidable set not hit is $U_2$, and normally we would now try 0 as the second clue in the hitting set.
However we excluded 0 as a possible future clue in the previous step, so there will be only three branches in this case.
Only 1, 27, and 28 are actually tried as the second clue of the hitting set being constructed.
The pseudocode for the procedure that initially sets up the required binary vectors is as follows:
\begin{verbatim}
 Initialize(U,m)            // U = array of unavoidable sets, m = number of sets
   for i from 0 to m-1      // first set up vectors needed for the dead clues
     do for each c in U[i]
          do U[i].deadclues[c] := (0,...,0)
             for each d in U[i]
               do if d <= c
                    then U[i].deadclues[c].SetBit(d)
   create arrays deadvec[0..16] and statevec[0..16]
   deadvec[0] := statevec[0] := (0,...,0)
   InitHittingVectors(U,m)    // finally set up hitting vectors as in Section 6.1
\end{verbatim}
The actual procedure that recursively adds clues to a hitting set then looks like this:
\begin{verbatim}
 AddClues(j,U,m,hitset,statevec,deadvec)
   if statevec[j] = (1,...,1)
     then GeneratePuzzles(j,hitset,deadvec[j]) // all unav'ble sets are hit, add
                                              // 16-j clues in all possible ways
   		 return
   else if j=16
      then return  // after drawing 16 clues some unav'ble sets are still not hit
   i := statevec[j].GetIndexOfLowestZeroSlot() // pick first unav'ble set not hit
   for each c in U[i]
     do if deadvec[j].GetBit(c) = 0 // first verify that this clue is still alive
       then hitset[j+1] := c        // add clue to hitting set
            statevec[j+1] := statevec[j] OR hitvec[c]    // update state vector
                                                         // exclude smaller clues
            deadvec[j+1] := deadvec[j] OR U[i].deadclues[c]
            AddClues(j+1,U,m,hitset,statevec,deadvec)    // add more clues
\end{verbatim}
We now convince the reader that the above algorithm does not miss any hitting sets.
Suppose that $G$ is a sudoku solution grid containing a proper 16-clue puzzle $P$.
We need to show that algorithm just presented will find $P$.
Let $U$ be the member of the family $\mathcal{F}$ of unavoidable sets used for drawing the first clue from.
Since $P$ is a proper puzzle, it intersects $U$, so we may set \mbox{$c = \mathrm{min}\;P \cap U$}, i.e., we let $c$ be the smallest clue of $P$ contained in $U$.
When we add $c$ to our candidate hitting set, only clues smaller than $c$ will be excluded from the search, however, as $c$ is the smallest clue of $P$ contained in $U$, no clues of $P$ will actually be excluded.
The exact same is true for the second, third, etc., clue we add --- at each stage, when we add the smallest clue of $P$ also contained in the unavoidable set in question, only clues not appearing in $P$ will be marked `dead'.
After adding the $16^{th}$ clue in this way, our hitting set will equal $P$.
Since $P$ hits all unavoidable sets, $P$ hits all unavoidable sets in the family $\mathcal{F}$ that \emph{checker} uses, and therefore no further unavoidable sets are available, so that \emph{checker} will test the set $P$ for a unique completion (see Section~\ref{sec:checker}), and thus find the 16-clue puzzle $P$.

As a closing remark, we originally added the dead clue vector to \emph{checker} because we wanted to search this special sudoku grid
\begin{figure}[h]
	\begin{center}
	\includegraphics[scale=0.85]{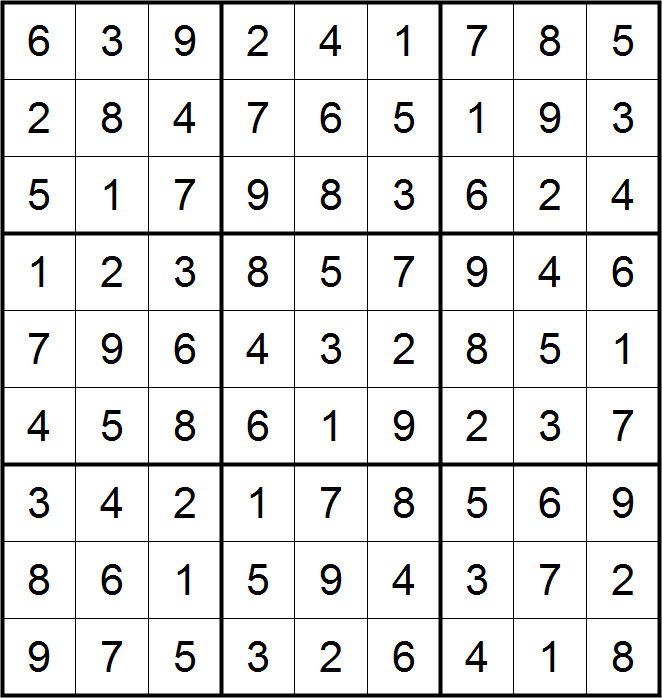}
	\caption{A grid containing 29 proper 17-clue puzzles}\label{img:sfgrid}
	\vspace{-15pt}
	\end{center}
\end{figure}

\noindent
for all 17-clue puzzles.
To this day, this grid holds the record as the grid having the largest known number of 17-clue puzzles (29, all found by Gordon~Royle).
In 2005, this grid was considered a likely candidate to have a 16-clue puzzle, but using \emph{checker} we were the first to prove that there is no 16-clue puzzle in this grid \cite{delahaye}.
Of course, we also wanted to know exactly how many 17-clue puzzles it contained, but the very first \emph{checker} would have taken several months of CPU time to answer that question.
After we had implemented the dead clue vector in 2006, we were able to exhaustively search this grid in less than a week for all 17-clue puzzles.
The result was that Royle had already found all of them, i.e., it was now known that there are \emph{exactly} 29 proper 17-clue puzzles contained in this grid.

\section{Hitting set algorithm of the new checker}\label{sec:newalgo}

This section is a detailed version of the outline provided in Section \ref{subsec:sketchC2}.
We will explain our new algorithm, the algorithm that we used to enumerate the hitting sets of size 16 given a family of unavoidable sets for a particular fixed sudoku solution grid.
As far as this hitting set algorithm is concerned, there are really three improvements over the original \emph{checker} described in the last section.
In this section we describe these improvements.

The first improvement is that we added higher-degree unavoidable sets to \emph{checker} so as to obtain an early ``no'' during the enumeration of hitting sets whenever possible, i.e., in order to abandon the search of a branch as soon as possible.
For instance, if, after drawing 15 clues, there is an unavoidable set of degree~2 that is not yet hit, then we do not have to continue and draw the $16^{th}$ clue as we know that at least two more clues are required for a hitting set.
We outlined this improvement in Section~\ref{subsec:sketchC2}, and we give the details here in Section~\ref{subsec:backtrackbyhigherdeg}.

The second difference is that we discard all those unavoidable sets that have been hit after drawing the first few clues, so that, when adding the remaining clues, we are working with shorter vectors (i.e., a smaller amount of data).
For instance, initially we begin with (up to) 384 minimal unavoidable sets, and after drawing the first seven clues, we check which unavoidable sets have been hit and continue with only the smallest (up to) 128 unavoidable sets.
So when picking the last nine clues, for tracking the minimal unavoidable sets we are using binary vectors of length 128 only, not binary vectors of length 384 as with the first seven clues.
Similarly for the higher-degree unavoidable sets.
We describe this technique in Section~\ref{consolidation}.

The third improvement is that, when choosing which unavoidable set to use for drawing the next clue from, we now invest some effort to make the best, or at least a better, choice.
Recall that with the original \emph{checker}, we selected the unavoidable set to use for drawing the next clue from in a greedy fashion --- we simply used one of smallest size.
However, this is not generally optimum.
A different unavoidable set of the same size, or even a bigger set, may be a better choice since some of its clues may have been excluded from the search already, so that its \emph{effective} (or \emph{real}) size, and hence the number of branches to be taken, may actually be smaller.
Therefore, when choosing unavoidable sets for drawing clues from, the new \emph{checker} also takes the dead clue vector into account.
We describe this in Section~\ref{effective}.

The first of the above changes --- the idea to use higher-degree unavoidable sets --- is certainly the most important one, and without it this computation would not have been feasible for several years.
However, the other two changes, too, saved us a considerable amount of CPU cycles.

\subsection{Improving backtracking using higher-degree unavoidable sets}\label{subsec:backtrackbyhigherdeg}

Here we will explain the most important of the three improvements we made to our hitting set algorithm compared to the version of 2006.
It is about how the use of higher-degree unavoidable sets enables us to considerably prune the search tree.
We begin with the following definition.

\begin{defn}
A collection of pairwise disjoint unavoidable sets in a sudoku solution grid is called a \emph{clique}.
If there is no clique having a greater number of unavoidable sets, then we further say that the clique is \emph{maximum}.\footnote{This terminology comes from graph theory --- in the original \emph{checker}, a maximum clique was found by setting up an undirected graph whose vertices were the minimal unavoidable sets, and where two vertices were adjacent if the corresponding unavoidable sets were disjoint. Hence the term ``max clique number'', or \emph{MCN} for short --- the biggest number of pairwise disjoint unavoidable sets that a grid possesses. In particular, a grid whose MCN is $m$ cannot have a puzzle with fewer than $m$ clues.}
\end{defn}

Recall that, by Proposition~\ref{prop:higherdegunav}, if after drawing $j$ clues, there is an unavoidable set of degree \mbox{$17 - j$} that is not hit, then we do not need to traverse the respective branch of the search tree as we already know that it cannot contain any proper 16-clue puzzles.
On the other hand, from Corollary~\ref{cor:higherunav}, the union of the sets in a clique of size $d$ is an unavoidable set of degree $d$.
Therefore, right before we begin the enumeration of hitting sets, we obtain a (usually quite good) collection of unavoidable sets of degree~2, 3, 4, 5, simply by finding cliques of size~2, 3, 4, 5.

We track these higher-degree unavoidable sets during the enumeration of hitting sets just like the ordinary (degree~1) unavoidable sets, i.e., through the use of state and hitting vectors for each degree.
After twelve clues have been drawn, if there is an unavoidable set of degree~5 in our collection that is not hit, then we may abandon the search and backtrack immediately.
Similarly if there is an unavoidable set of degree~4, 3, or 2 in our collection that is not hit after drawing 13, 14, or 15~clues, then we may abandon the search and backtrack immediately.

This may seem like an obvious way to prune the search tree with the hitting set problem.
However, back in 2008 when we first realized that this idea would allow us to dramatically speed up \emph{checker}, this was not yet described anywhere in the literature.
This is surprising because, like the other improvements we made, it is not at all specific to sudoku but applies in an equal manner to the general hitting set problem.
The first public mention of higher-degree unavoidable sets, to our knowledge, was in a posting of $23^{rd}$ July 2010 to the sudoku programmers' forum by Mladen Dobrichev, who had just released the first version of his open-source tool \emph{GridChecker} \cite{dobrichev}:
\begin{quote}
``UA set is a region of the grid where we know at least one clue must exist. [...]
Additionally there are regions where at least two clues must exist.
A trivial example of such region is the union of 2 mutually disjoint UA sets --- UA sets which have no cell in common.
But, it is not necessary such regions to consist of disjoint UA.
For example 3 UA of size 6 could form region of size 9 requiring at least 2 clues. [...]
Similarly there are regions where at least 3, 4, 5, etc.\ clues must exist.''
\end{quote}

\noindent
It is remarkable how Dobrichev even used the term \emph{trivial} unavoidable set.
However, although \emph{GridChecker} does use higher-degree unavoidable sets, at the time of our work it used a relatively limited collection of such sets, namely those coming from the members of a maximum clique.\footnote{\emph{GridChecker} also uses a maximum clique, however, it does so in a more clever way than our original \emph{checker}.
Moreover, though a powerful collection of unavoidable sets of degree~2 is actually being used, it seems that it is only fully deployed in the method $\mathrm{chunkProcessor \colon\!\!\!\colon\!\! iterateClue}$, for which it was ``rare'', in the words of Dobrichev, that the last clue was being picked there.
(In \emph{GridChecker}, the last clue is usually drawn, if at all necessary, in $\mathrm{chunkProcessor \colon\!\!\!\colon\!\! iterateClueBM}$.)}

There was another public mention of this idea and how it may be used to prune the search tree, in November 2010, when Hung-Hsuan Lin and I-Chen Wu published the paper \cite{linwu} (for an updated version, see \cite{wu}).
In this work, in Lemma~2 it is shown that the search for a $k$-clue puzzle in a sudoku grid may be stopped after selecting $j$ clues if there are at least \mbox{$k - j + 1$} unavoidable sets that are not yet hit (``active'' unavoidable sets, in the language of that paper) and which are pairwise disjoint.
The authors further describe how they used this observation to speed up our original \emph{checker} by a factor of 129 and hence achieved a running time of 13.9~seconds per grid on average.

As far as the problem of finding a clique of a certain size of active unavoidable sets is concerned, they point out that the maximum clique problem is itself NP-complete (like the hitting set problem), and that they therefore use a greedy algorithm for attempting to construct cliques of the desired size.
This is not the most efficient way to construct cliques, however, and it is likely to be the main reason why our own, new \emph{checker} is about twice as fast as the \emph{checker} written by Lin and Wu.
For, constructing cliques over and over again, even just small ones, means duplicating effort.
In contrast, with our new \emph{checker} we compute a large number of cliques of all sizes less than or equal to 5 \emph{exactly once} at the beginning of the search, and keep track of which ones are hit (become inactive) as we add clues.

The consequence is that, e.g., after drawing twelve clues, we merely have to do a boolean $\mathrm{OR}$ of two binary vectors and then check if the resulting vector has a 1 in every slot in order to find out if there is a clique of size~5 not yet hit.
Note that, the moment we find that one slot has a 0, we do not need to compute the remaining slots.
In other words, it is actually sufficient to do the boolean $\mathrm{OR}$ on just part of the vectors involved at first.
In the case of the degree~5 unavoidable sets, we compute the required vector in three steps, checking for a slot containing a zero at the end of each step.
All that can be done efficiently using SIMD programming, whereas constructing a clique of size~5 from scratch is certainly more work and in particular involves more dependencies (where an operation requires the output of the previous one) and is therefore not very suitable for SIMD programming.
Of course, with our method, more work has do be done upfront (while the first eleven clues are picked), but on the other hand, a greedy algorithm will often miss cliques of the required size even though they exist.
On balance, it seems that ours is the more efficient approach.

\subsubsection*{Summary}

Now is a good time to summarize exactly what we do.
We will walk through the case of the cliques of size 4; the other ones are similar.
So suppose that there are $m$ sets \mbox{$\mathrm{U[0]},\ldots,\mathrm{U[m-1]}$} in our initial family of unavoidable sets.
We add the following statements to the procedure $\mathrm{InitHittingVectors}$, see the pseudocode in Section~\ref{very_basic}:
\begin{verbatim}
 var   count := 0;
 const START := 27;
 create array CLQ4[0..32767]
 for i from START to m-1
   do for j from START-1 to i-1
        do if U[i] intersects U[j]
             then continue
           for k from START-2 to j-1
             do if U[k] intersects either U[i] or U[j]
                  then continue
                for l from START-3 to k-1
                  do if U[l] intersects none of U[i],U[j],U[k]
                       then // found a clique of size 4
                         CLQ4[count] := union(U[i],U[j],U[k],U[l])
                         inc(count)
                         if count = 32768
                           then goto SetUpHittingVectors
SetUpHittingVectors:
 create arrays quadhitvec[0..80] and quadstatevec[0..80]
 for c from 0 to 80
   do quadhitvec[c] := (X_{CLQ4[0]}(c),...,X_{CLQ4[count-1]}(c))
 quadstatevec[0] := (0,...,0) // initially, no clique of size 4 is hit
\end{verbatim}
Here $\mathrm{CLQ4}$ is an array of sets, $\mathrm{quadhitvec}$ is an array of binary vectors of length 32,768, and
	$$\mathcal{X}_{\mathrm{CLQ4[i]}}\colon\{0,\ldots,80\} \to \{0,1\}$$
is again the characteristic function for the set \mbox{$\mathrm{CLQ4[i]}$} with respect to \mbox{$\{0,\ldots,80\}$}, see Section~\ref{very_basic}.
The purpose of the constant $\mathrm{START}$ is to ensure that we do not collect cliques that will be hit anyway after drawing 13 clues --- obviously there is no point using cliques involving \mbox{$\mathrm{U}[0]$}, for example, since the first unavoidable set in our family will always be used for drawing the first clue from, so it will certainly be hit by the time we check if there is an unhit clique of size 4.

When adding clues to the hitting set under construction, we need to add one statement (see the pseudocode for the procedure $\mathrm{AddClues}$ in Section~\ref{subsec:deadcluevec}), namely
\begin{verbatim}
 quadstatevec[j+1] := quadstatevec[j] OR quadhitvec[c]
\end{verbatim}
Recall that $j$ is the index of the clue we add to the hitting set and $c$ is the clue itself.
Finally, we need to add logic so as to check if there are any cliques of size 4 not hit after we have drawn the $13^{th}$ clue:
\begin{verbatim}
 if j = 13 and not quadstatevec[13] = (1,...,1)
   then return // a clique of size 4 is unhit after 13 clues have been added
\end{verbatim}
The $\mathrm{return}$-statement means that we backtrack.

\subsection{Consolidating binary vectors for the innermost loops}\label{consolidation}

The second improvement we made to the hitting set algorithm is that, with the unavoidable sets, after picking the first few clues, we discard those sets that have already been hit.
The reason for this is that we then have to carry fewer data, i.e., we can use shorter vectors in the innermost loops of \emph{checker}, where most of the compute time is spent.
For instance, when enumerating hitting sets, initially we use a binary vector of length 32,768 to track the unavoidable sets of degree~4, but after the first five clues have been drawn, we switch to a vector of length 1,536.
Of course we also have to update the respective hitting vectors --- we refer to this process as \emph{consolidating} the hitting vectors.
However, this is not particularly advanced never mind original; what we are doing here is really just a standard \emph{gather} operation.
The key part was to realize that consolidating (gathering) the hitting vector actually helps here.

Continuing with the example of the degree~4 unavoidable sets from Section~\ref{subsec:backtrackbyhigherdeg}, in detail, what we do is as follows.
After five clues have been added to the candidate hitting set, most of the unavoidable sets of degree~4 will usually have been hit, because the degree~4 unavoidable sets we use typically have around 30 elements.
The probability that a given subset of size 30 of a sudoku solution grid does not contain five randomly chosen clues is
$$\frac{\binom{81 - 30}{5}}{\binom{81}{5}} \; = \; \frac{\binom{51}{5}}{\binom{81}{5}} \;\approx\; 0.092.$$
Therefore we expect that only about one in eleven unavoidable sets of degree 4 is \emph{not} hit after selecting five clues.
In other words, the state vector for the degree 4 unavoidable sets will carry 1's in approximately ten out of eleven places, on average.
(Recall that a 1 in the $i^{th}$ slot of this vector means that the $i^{th}$ unavoidable set of degree 4 has been hit already.)
For all future clues to be drawn, we only need to keep track of the unavoidable sets corresponding to the 0's in the state vector, since only those unavoidable sets have not yet been hit.
One way to do this is to recompute the 81~hitting vectors for this much smaller collection of degree~4 unavoidable sets, just as we did with the original collection right before the enumeration of hitting sets began.

A more efficient method is to take the original hitting vectors and shrink them to vectors of length at most 1,536, where the $i^{th}$ slot, \mbox{$0 \leq i \leq 32767$}, is erased according as the state vector carries a 1 or a 0 in this slot.
Pseudocode for the obvious algorithm to accomplish this --- which processes one slot at a time --- looks as follows:
\begin{verbatim}
var j := 0 // j is the index of an unavoidable set in the new collection
create array newquadhitvec[0..80]
for c from 0 to 80  // set all the new (shorter) hitting vectors to 0
  do newquadhitvec[c] := (0,...,0)
for i from 0 to 32767 // i is the index of a set in the original collection
  do if not quadstatevec[5].IsBitSet(i)
       then
         // the set of index i is not yet hit after picking five clues, so
         // it becomes the set of index j of the new collection
         // ==> for each clue, copy the i-th bit of the original hitting
         // vector to the j-th bit of the new hitting vector:
         for c from 0 to 80  
           do newquadhitvec[c].SetBit(j,quadhitvec[c].GetBit(i))
         inc(j)
         if j = 1536
           then return // we have found 1536 unhit unav'ble sets of degree 4
\end{verbatim}
However, it is possible to do considerably better than that, by processing more than one slot (bit) at a time.
For, suppose that we are given a consolidation (gathering) function
	$$\mathrm{Con}\colon\{0,1\}^8 \times \{0,1\}^8 \to \{0,1\}^8$$
that shrinks the second input vector such that precisely those slots are eliminated in which the first of the input vector carries a 1.
For example,
	$$\mathrm{Con}((0,1,1,0,1,0,1,0), (1,1,0,1,0,0,1,0)) \; = \; (1,1,0,0,0,0,0,0),$$
because the first input vector has 0's in slots 1, 4, 6 and 8 and the second vector has the digits 1, 1, 0 and 0, respectively, in these four slots, which therefore become the first four slots of the output vector; naturally slots 5--8 of the output are filled with 0's.
Suppose further that we are given a function
	$$\mathrm{hamwt}\colon\{0,1\}^8 \to \{0,\ldots,8\}$$
that computes the Hamming weight of a binary vector of length 8.
Then, given these two functions \mbox{$\mathrm{Con}$} and \mbox{$\mathrm{hamwt}$}, we can now easily process eight slots at a time:
\begin{verbatim}
var cnt := 0 // 'cnt' is the no. of unavoidable sets in the new collection, so far
var tmpvec   // a temporary binary vector of length 1536
for c from 0 to 80  
  do newquadhitvec[c] := (0,...,0)
for i from 0 to 4095 // 4096 = 32768 / 8
  do for c from 0 to 80  
    do tmpvec := Con(quadstatevec[5].GetByte(i),quadhitvec[c].GetByte(i))
       tmpvec.ShiftRight(cnt)
       newquadhitvec[c] := newquadhitvec[c] OR tmpvec
  cnt := cnt + (8 - hamwt(quadstatevec[5].GetByte(i)))
  if cnt >= 1536
    then return
\end{verbatim}
Here, the method $\mathrm{GetByte}$ is similar to $\mathrm{GetBit}$, except that it returns 8 bits at a time, while $\mathrm{ShiftRight}$ shifts the respective binary vector by the specified number of places, inserting 0's at the front.

\subsubsection*{Implementation notes}

In \emph{checker}, the function $\mathrm{Con}$ is implemented using the precomputed table $\mathrm{confunctab}$, while $\mathrm{hamwt}$ is already provided by the hardware through the $\mathrm{POPCNT}$ instruction.
So the above is really an implementation of 8-bit \emph{software} scatter.
Note that with the Haswell microarchitecture, Intel processors support (64-bit) \emph{hardware} scatter by way of the $\mathrm{PEXT}$ instruction.
Therefore, rewriting the procedure $\mathrm{ConsolidateHitvec}$ in \emph{checker} to take advantage of this instruction should result in a very significant performance increase on Haswell CPUs.

With the actual (Nehalem-optimized) code in \emph{checker}, we further performed partial loop unrolling with the inner of the above two loops, by always processing the hitting vectors for six clues in parallel.
Finally, one more slight improvement is gained by considering (shrinking) only those hitting vectors that correspond to clues that are not yet dead, and ignoring the other ones.

\subsection{Taking the effective size of the minimal unavoidable sets into account}\label{effective}

In this section we describe the final improvement over our original hitting set algorithm.
Recall that, at each stage, our hitting set algorithm adds clues from the first unavoidable set that is not yet hit, as described in Section~\ref{very_basic}.
Since the unavoidable sets are ordered by size, the set in question will always be one of smallest size.
However, this is not usually the best choice.
For instance, if the unavoidable set of lowest index that is not yet hit has empty intersection with the set of currently dead clues, and the unavoidable set of second lowest index that is not yet hit has the same size but one of its clues has been marked `dead' earlier, then it is obviously better to use the unavoidable set of second lowest index for drawing clues from.
For this reason, in the new \emph{checker}, when selecting the first ten clues we always use an unavoidable set of minimum effective size.\footnote{For the eleventh clue we still find the unavoidable set of minimum effective size among the first 64 unavoidable sets in our collection, and for the twelfth clue we find the unavoidable set of minimum effective size among the first five unavoidable sets not yet hit. For drawing the remaining four clues we always simply use the first unavoidable set not yet hit. The reasons for this are explained in Section~\ref{sec:checkerdetails}.}
Here, the \emph{effective size} of an unavoidable set is the number of clues it has that are not yet dead.

\subsubsection*{Implementation notes}

The way to efficiently accomplish this is to first invert the vector of dead clues, so that we obtain the \emph{vector of alive clues}, i.e., the binary vector that has a 1 in slot $i$ precisely if the clue $i$ is still alive.
Then, for each unavoidable set that is still unhit, we take the boolean $\mathrm{AND}$ of the vector of alive clues with the vector that has a 1 in exactly those slots corresponding to the clues this unavoidable set contains.
In other words, in the latter vector we simply set all slots to zero that correspond to clues that are dead.
We finally obtain the Hamming weight of the resulting vector, which is equal to the effective size of the unavoidable set in question.
We do this for all unavoidable sets, and we always remember the index of the first set that had the smallest effective size, so far.

\section{Running through all grids: the final computation}\label{sec:computation}

With the catalogue of all essentially different sudoku grids and \emph{checker} in hand, the actual search for 16-clue puzzles involved running \emph{checker} on each grid in the catalogue.
We outlined these steps in Section~\ref{summmethod}.
Because of the number of grids to check (about 5.5~billion) this had to be done using a large number of processors.
In this section we discuss the details of this computation.

\subsection{Some remarks on the implementation of the new checker}\label{sec:checkerdetails}

Following our successful PRACE prototype project application in 2009, we were able to test an early version of the new \emph{checker} on four different hardware platforms (AMD Istanbul, IBM Blue~Gene/P, IBM Power 6, Intel Nehalem), and we found that Nehalem is the best for us.
Hence we optimized \emph{checker} for Nehalem, to the extent of rewriting critical routines in assembly language.
We did this in such a way so as to maximize simultaneous use of the different execution units of a Nehalem core (instruction-level parallelism).
We further heavily used SSE to facilitate data-level parallelism.
Also, using machine language allowed us to retain key data \emph{checker} frequently uses inside the processor's registers, thereby further reducing overhead.
During the development of these assembly routines, we found the optimization manuals by Agner Fog \cite{fog} to be most helpful.
Moreover, we took advantage of the SMT mode of a Nehalem/Westmere CPU (hyper-threading), which was enabled on the cluster \emph{Stokes} at ICHEC that we used for this computation.
We give more details on the parallelisation later in this section.

Apart from the above, the main change to the implementation of the new \emph{checker} (compared to the version from 2006) is that we expanded all function calls when enumerating hitting sets.
So there are no recursive function calls when drawing more clues in the new \emph{checker}, rather, there are now sixteen nested loops, one for each clue.

One other implementation detail is that instead of performing a boolean $\mathrm{OR}$ (as described in the earlier sections) for updating the binary vectors that track the unavoidable sets of the various degrees, and later test if a vector is all 1's, we actually perform a boolean $\mathrm{AND}$ and check if the vector is all 0's.
The reason for this is that with SSE~4.2, it is slightly easier to test if an XMM-register (128 bits) is all 0's than if it is all 1's.

\subsubsection*{Tradeoffs}

During the development of \emph{checker}, there were several design choices that involved tradeoffs, i.e., there were often different alternatives that each had their advantages, so we had to find out the overall best (fastest) by experimentation.
These include:
\begin{itemize}
\item The number unavoidable sets to use, both minimal and of higher degree.
Obviously using more unavoidable sets results in fewer hitting sets being found, however, there is of course a price for having access to a larger selection of unavoidable sets.
We found that initially starting out with up to 384 minimal unavoidable sets was the optimum.
We also tried adding unavoidable sets of size 13 to \emph{checker}, but it seemed to make almost no difference to the average running time one way or another.
\item Which higher-degree unavoidable sets to use.
There does not seem to be a big difference whether or not one uses unavoidable sets of degree up to 5 only, or also unavoidable sets of degree~6.
However, adding unavoidable sets of degree~7 appears to slow down the computation.
\item The point at which we consolidate the hitting vectors.
Doing that later means a more powerful collection of unavoidable sets in the innermost loops of \emph{checker}, where most of the running time is spent.
However, doing that later also means having to do it more often.
The best combination could only be found through experimentation.
\item
Finding the best unavoidable set to use for drawing clues from.
In principal it would be best to always use the unavoidable set having the least effective size, but obviously not so if the performance hit incurred by finding this particular unavoidable set outweighs the gain.
Again, some experimentation was required.
\item
The number of chunks into which we divide a binary vector (of large length) when we test to see if every slot is 1.
Having more chunks means that we can potentially save work --- in case one of the slots of the vector is 0 --- but on the other hand it also means more $\mathrm{if}$-statements (to check whether or not a chunk is all 1's), i.e., more (mispredicted) branches.\footnote{In his document \emph{The microarchitecture of Intel, AMD and VIA CPUs}, Agner Fog states that according to his measurements, the penalty for a mispredicted branch on Nehalem is at least 17 clock cycles \cite{fog}.}
\end{itemize}

\subsection{Parallelisation strategy for the grid search}\label{sec:parallel}

The parallelisation strategy we used with this project was driven by the simple fact that each sudoku solution grid could be checked independently.
Therefore, as long as we ensured that all grids were actually processed, they could be handled in any order.
We focused our parallelisation effort on a HPC infrastructure. 
In this domain, the main parallelisation paradigm is MPI.
This \emph{Message Passing Interface} library gave us all the flexibility we needed to achieve an efficient and reliable parallelisation.

\subsubsection*{Loadbalancing and task-farming}

The final design of the parallel code was based on a master/slave architecture.
One group of master processes manages the reading of the grids to search, distributes the work to a pool of slave processes, collects the results, and finally writes them to disk.
The exact ratio between masters and slaves, along with the work distribution policy and the resulting printing pattern, was unclear at first.
That is why we made an initial version of the code very configurable in this regard, and we experimented with various options to find the most efficient one.
This took place during our access to the PRACE infrastructure for a prototype machine evaluation project.
Following our PRACE tests, the configuration we finally selected consists of one single master process per run.
This proved sufficient for the jobs we typically ran, which had less than 500~slave processes.
When using more than 500~slaves, running several independent jobs with one master process each proved more efficient at the batch scheduler level than increasing the size of the slaves' pool for one single job.

For distributing the work to the slave processes, the loadbalancing (which refers to the distribution, or \emph{task-farming}, of the tasks by a master node to the slave nodes) algorithm we implemented was as follows: a global list of sudoku grids to search was first read from disk by the master.
Next a (small) batch of grids was sent to each slave in a round-robin fashion.
Then the master would wait for a slave to contact it upon finishing.
Each slave process would run \emph{checker} on the batch of grids it had been assigned and then send the results back to the master.
The master would answer whichever slave tried to contact it first, collect its results, store the results on disk, check if more work was available, and either send new grids to search to the slave or command it to stop, according as sufficient (job) time was still available.
The master then would wait for the next slave to contact it, until all slaves had been commanded to stop.
Finally the master itself would stop and hence the job would terminate.
For optimum efficiency, we naturally wanted all slave nodes to finish around the same time.

This is the ideal case, and the job finishes properly.
However, it might happen that a job finishes unexpectedly.
For example, this would be the case if the batch scheduler killed the job for some reason, like having reached the wall-time limit.
In this case, we could restart the very same job, and all the previously processed sudoku grids would be automatically removed from the pool of work to do, avoiding any waste of resources.
Only grids that had either been partially processed by a slave, or finished but for which no acknowledgment had been transmitted to the master process, would be re-processed.
Because searching a grid usually takes only a few seconds, and a job duration is typically 24 to 84 hours, the potential number of grids to re-compute compared to the total number of grids checked in one compute job is tiny.
So the usage of the CPU cores was near-optimal, i.e., almost no processor time was lost at the end of a job.
Moreover, as the missing grids from a previous job would be the first ones to be computed at restart, if for some reason one grid takes an unusually long time to be searched by \emph{checker}, which increases the likelihood that the job will get killed for time-out during its computation, that grid would be re-processed right at the start of the next job and so there would be more than enough time to complete it.

The parallelisation technique we used was therefore well suited for our kind of workload, where the various individual jobs would take an unpredictable time to finish.
Having a first-come first-served work distribution policy ensured a natural load re-balancing between the slave processes, where the ones dealing with harder grids would simply report less often to the master process than the one with easier grids.

\subsubsection*{Extra performance refinements}

Selecting Intel Nehalem as our target platform gave us additional opportunity for performance tuning.
Indeed, this processor architecture supports a feature called SMT mode (hyper-threading).
This \emph{Simultaneous Multi-Threading} mode allows two processes to run concurrently on each CPU core.
Exploiting the SMT mode proved to be effective for us, leading to a 26.5 per cent performance gain at no cost.
To cope with the scheduling policy constraints on the \emph{Stokes} cluster, we had to further parallelise our code using OpenMP.
That way, we were able to pin each MPI process to one physical CPU core, where two OpenMP threads were running concurrently using the 2-way SMT mode available there.

\subsection{Ensuring the correctness of our programme checker}\label{sec:correctness}

Certainly the most important aspect of this project is correctness.
At first glance, it may appear difficult to verify the correctness of our programme \emph{checker} on the grounds of the lack of available test cases (no sudoku grids containing any 16-clue puzzles).
To be able to do some testing anyway, we produced a version of \emph{checker} that searches for 17-clue puzzles.
This version had minimal changes over the version that searches for 16-clue puzzles --- we made only those changes that were absolutely necessary.
We ran this \emph{checker} on all known grids having at least one 17-clue puzzle, and with every such grid, all the 17-clue puzzles that grid was known to contain were found by \emph{checker}.

As far as white-box testing goes, using the debugger we stepped through every line of the code, carefully verifying that each instruction does exactly what we thought it would do.
We also added pre- and post-conditions throughout the code (in the form of $\mathrm{assert}$-statements), to check the internal consistency of the data structures and to make sure that parameters passed were within their respective valid range.
For further debugging, we used the tool \emph{valgrind}, as well as its companion \emph{cachegrind}, the latter mainly for optimizing memory accesses.
Moreover, all changes to the code during development were tracked using a version control system.

We implemented two safety checks in \emph{checker}.
Firstly, after the procedure that finds all the minimal unavoidable sets in a grid is finished, we again test each set found, by running its complement through the solver, to make sure that the set is really unavoidable.
Secondly, we also double-check the answer the solver produces.
Each time a puzzle is run through the solver, we have the solver save the first two completions found.
Then we check that both completions are in fact valid completions of the given puzzle, and we further verify that they are actually different.
Should that ever not have been the case, the grid in question would have been logged.
However, this never actually happened, i.e., the solver always produced correct answers.

We technically used six different versions of \emph{checker} throughout the computations.
That is to say, we updated our code five times during the computations, where each version was a small improvement and a little faster than the previous version.
It would have been optimal to use a single version of the code for the entire computation, however, we were under time pressure due to the competitive nature of this project, and we needed to start running compute jobs as soon as was possible.
Had we used just the first version of the new \emph{checker} for the entire computation (as it was at the time we got access to the cluster \emph{Stokes}), then we would have needed at least 1.5 million additional compute hours on top of what we actually used.
Only through updating \emph{checker} were we able to finish the computation before the end of 2011.
Note however, that the difference between one version of \emph{checker} and the next one was usually just one procedure that was rewritten, i.e., the changes between updates were relatively minor, and of course we fully tested each version before the upgrades, as described in the beginning of this subsection.

\subsection{The actual computation}

The entire computation took about 7.1~million core hours on the \emph{Stokes} machine at ICHEC.
\emph{Stokes} is an SGI Altix ICE 8200EX cluster with 320 compute nodes. Each node has two Intel (Westmere) Xeon X5650 hex-core processors and 24~GB of RAM.
We divided the computation up into several hundred jobs.
We started running jobs in January 2011, and we finished in December 2011.
For each grid, we recorded the number of hitting sets found, and the time taken to search that grid. 
This data is stored in our log files.
The average running time of our final version of the new \emph{checker} is about $3.6$ seconds per grid (on a single core of the above model of CPU, with hyper-threading enabled).

\subsection{GPU computing opportunities}

The opportunity of using Graphic Processing Units (GPUs) for this project has been examined and rejected for the reason that at the time we could have begun porting the code to a GPU, we were already more than half-way through the computations.
Changing the code would have required development and validation time which, even with a dramatic speed-up, would not have permitted us to finish the project earlier.
Moreover, it is not clear to us if our hitting set algorithm is a suitable application for a GPU, at all.

For starters, one would need to come up with a very different parallelisation strategy --- exploiting the intrinsic parallel nature of the global workload is definitely not sufficient to perform an effective parallelisation at the GPU level.
The reason is that, due to the specific architecture of GPUs, a large number of threads would all need to work on a single grid, rather than on different grids.
This would necessitate parallelism at the code level itself, not just at the workload level.

At first, one might think this to be possible because checker mainly performs vector operations such as boolean-OR and reductions, which are well-suited for a GPU, provided the respective vectors are long enough.
Unfortunately, if the GPU is used only as a coprocessor for vector operations, then performance benefits are likely to be limited, because this approach would involve a lot of communications between the host and the GPU, i.e., the PCIe channel would almost certainly be a major bottleneck.
With that said, if larger chunks of the computations could be kept on the GPU, even if this means less (relative) efficiency there than on a CPU, then it might actually be possible to reap a performance gain. 
\section*{Acknowledgements}

This work has built on ideas and work of many other people.
It began from reading posts on the sudoku forums, and we thank many of the posters.
They include Guenter Stertenbrink, Gordon Royle, Ed Russell, Glenn Fowler, Roger Wanamo, who helped at various stages.
We also thank Konstantinos Drakakis and ICHEC for some extra CPU hours.
We thank the staff of ICHEC who were very supportive throughout the project.
Finally, we are grateful to Sascha Kurz for his comments on an earlier version of the manuscript.

\addcontentsline{toc}{section}{Bibliography}

\end{document}